\newtheorem*{proposition*}{Proposition}
\newtheorem*{theorem*}{Theorem}
\newtheorem{assumption}{Assumption}
\newtheorem{lemma}{Lemma}
\newtheorem{theorem}{Theorem}
\newtheorem{definition}{Definition}
\newtheorem{remark}{Remark}
\newtheorem{proposition}{Proposition}
\newcommand{\sanote}[1]{\textcolor{red}{[SA: #1]}}
\DeclareMathOperator*{\argmin}{argmin}
\title{\LARGE \bf Linear Quadratic Mean-Field Games with Communication Constraints} 
\author{Shubham~Aggarwal, Muhammad~Aneeq~uz~Zaman, and Tamer~Ba{\c s}ar
\thanks{The authors are affiliated with the Coordinated Science Lab, University of Illinois at Urbana-Champaign, Urbana, IL, USA 61801. Emails:
        {\tt\small \{sa57,mazaman2,basar1\}@illinois.edu}.}
\thanks{Research is supported in part by an AFOSR Grant (FA9550-19-1-0353).
}%
}
\tikzset{ remember picture,
   switch/.style = {rectangle,
                    draw,align=center,
                    label={below:#1},
   },
}
\newsavebox\mybox
\savebox\mybox{%
\tikz\draw[line width=0.7pt] (-0.4,0)--(0,0)
								(0,0)--(0.4,0.4);%
}
\begin{document}
\tikzstyle{rect} = [draw,rectangle,fill = white!20,minimum width = 3pt, inner sep  = 5pt]
\tikzstyle{line} = [draw, -latex]
\tikzstyle{dline} = [draw, dotted, -latex]

\maketitle
\thispagestyle{empty}
\pagestyle{empty}
\begin{abstract}
In this paper, we study a large population game with heterogeneous dynamics and cost functions solving a consensus problem. Moreover, the agents have communication constraints which appear as: (1) an Additive-White Gaussian Noise (AWGN) channel, and (2) asynchronous
data transmission via a fixed scheduling policy. Since the complexity of solving the game increases with the number of agents, we use the Mean-Field Game paradigm to solve it. Under standard assumptions on the information structure of the agents, we prove that the control of the agent  in the MFG setting is free of the dual effect. This allows us to obtain an equilibrium control policy for the generic agent, which is a function of only the local observation of the agent. Furthermore, the equilibrium mean-field trajectory is shown to follow linear dynamics, hence making it computable. We show that in the finite population game, the equilibrium control policy prescribed by the MFG analysis constitutes an $\epsilon-$Nash equilibrium, where $\epsilon$ tends to zero as the number of agents goes to infinity. The paper is concluded with simulations demonstrating the performance of the equilibrium control policy.
\end{abstract}
\section{Introduction}
In distributed real-world applications, like networked control systems \cite{ramesh2013design}, ecosystem monitoring \cite{gao2018optimal}, and energy harvesting \cite{nayyar2013optimal}, we rarely have the luxury of pure persistent communication. Hence, in this work, we study multi-agent
systems under a constrained communication structure. The communication constraints may appear in the form of limited sensor energy levels \cite{nayyar2013optimal}, noisy transmission medium \cite{tatikonda2004stochastic,tatikonda2004control}, limits on the communication frequency \cite{imer2010optimal,gao2015optimal} or some combination thereof, as we investigate in this work. Additionally, scalability becomes a huge challenge with increasing number of agents in a multi-agent system.
\par In this paper, we consider a discrete-time multi-agent game problem. Each agent is coupled with other agents through its cost function, which  incentivizes  the  agent  to  form  consensus with other players. In addition to a plant and a controller, the agent's control system (See Figure \ref{Inf_flow}) also consists of a scheduler and an Additive White Gaussian Noise (AWGN) channel. The scheduler controls the flow of information using a fixed scheduling policy. The communication through the AWGN channel is regulated by an encoder/decoder pair which constitutes a predictive encoder \cite{tatikonda1998control} to encode sequential data and a minimum mean-square estimation (MMSE) decoder to produce the best estimate of the plant state.

\medskip

\noindent \textbf{Related Work:} There have been several works in the literature studying estimation and control problems under communication constraints. Reference \cite{bansal1989simultaneous} considers simultaneous design of measurement and control strategies for a class of Linear Quadratic Gaussian (LQG) problems under soft constraints on both. The LQG problem has been further studied for a noisy analog channel \cite{tatikonda1998control} and a noiseless digital channel \cite{tatikonda2004stochastic} in the forward loop. All these works however, consider single agent problems with uninterrupted communication, unlike the setting of this work. In \cite{ramesh2013design}, the authors consider a problem where a network of plants share a noiseless communication medium via a state-based scheduling policy. The system has been shown to be dual effect free under a symmetry condition on the scheduling policy. Similarly the optimality of certainty equivalent control laws has been characterized under an event-triggered communication policy with a noiseless channel in \cite{molin2012optimality,antunes2019consistent}. Our work on the other hand proposes a multi-agent game where each agent has intermittent access to its state measurement through a noisy channel.

\par A key difficulty in multi-agent systems is that of scalability. To alleviate this challenge, a mean-field game (MFG) framework was proposed in \cite{huang2006large, huang2007large} by Huang, Malham{\'e} and Caines and simultaneously in \cite{lasry2007mean}, by Lasry and Lions. The essential idea in the MFG framework is that as the number of agents goes to infinity, agents become indistinguishable and the effect of
individual deviation becomes negligible (that is, the effect of strategic interaction disappears). This leads to an aggregation effect, which can be modelled by an exogenous mean-field (MF) term. Consequently, the game problem reduces to a stochastic optimal control problem for a representative agent along with a consistency condition.
\par Linear Quadratic MFGs (LQ-MFGs), which combine linear agent dynamics with a quadratic cost function, serve as a significant benchmark in the study of MFGs. Recent works on LQ-MFGs \cite{uz2020approximate, uz2020reinforcement} in the discrete-time setting are free of communication constraints or consider partially observed dynamics involving packet drop-outs \cite{moon2014discrete}, thereby making the underlying communication link unreliable. Furthermore, Secure MFGs \cite{uz2020secure} capture the setting where the agents deliberately obfuscate their state information with the goal of subverting an eavesdropping adversary. In these works, however, communication occurs at every time instance, in contrast to our setting here, where the communication is intermittent and the channel adds noise to the incoming signal.

\noindent \textbf{Contribution:} In this paper, we prove that under a fixed scheduling policy, an AWGN channel and a standard information structure, the dual effect of control \cite{bar1974dual} does not show up. The result is presented in Lemma \ref{L1} and is one of the key the observations of the paper. This renders the covariance of estimation error independent of control signals (for both transmission and non-transmission times). Under the mean-field setting, this insight enables us to reduce the game to solving a standard optimal control tracking problem \cite{uz2020reinforcement} along with a consistency condition. We prove the consistency condition of the mean-field equilibrium (MFE) under standard assumptions and characterize the linear dynamics of the equilibrium MF trajectory. Finally, we prove that the policies prescribed by the MFE constitute an $\epsilon$-Nash equilibrium for the finite population game and provide simulations to illustrate the performance of the equilibrium control policy.

The paper is organized as follows. Following this introduction, Section \ref{Sec2} introduces the finite-agent game formulation of the multi-agent system and the underlying information structures of each of the its entities (see Fig. \ref{Inf_flow}). In Section \ref{Sec3}, we formulate the LQ-MFG problem, characterize its MFE and demonstrate the $\epsilon$-Nash property of the MFE. In Section \ref{Sec4}, we provide simulations to analyze the performance of the MFE and conclude the paper in Section \ref{Sec5} sharing some highlights.

\textbf{Notations: }Let $X_k^i$ denote the $i^{th}$ agent's state at time instance $k$ and $X_{k:k'}^i$ the $i^{th}$ agent's state history from instant $k$ to $k'$, i.e., $X_{k:k'}^i = (X_{k}^i, \cdots, X_{k'}^i)$.  Let the set of non-negative integers and real numbers be denoted by $\mathbb{Z}^+$ and $\mathbb{R}^+$, respectively. The transpose of matrix $A$ is denoted by $A'$ and trace of a square matrix $M$ by $Tr\{M\}$. For some vector $z$ and positive semi-definite matrix $S$, let $\|z\|^2_S = z'Sz$. Unless stated otherwise, let $\|\cdot\|$ denote the 2-norm.

\section{Problem Formulation}\label{Sec2}
Consider an $N$-player game on infinite time horizon. Each agent's dynamics evolves according to a linear discrete-time controlled stochastic process as
\begin{align}\label{system}
X_{k+1}^i = A(\phi_i)X_k^i + B(\phi_i)U_k^i + W_k^i, ~~i \in [1,N],
\end{align}
where $X_k^i \in \mathbb{R}^n$ and $U_k^i \in \mathbb{R}^m$ are the state process and the control input, respectively, for the $i^{th}$ agent. $ W_k^i \in \mathbb{R}^n$ is an i.i.d. Gaussian process with zero mean and finite covariance $\Sigma_w$. The initial state $X_0^i$ has mean $\nu_{\phi_i,0}$ and covariance $\Sigma_x$, and is assumed to be statistically independent of $W_k^i$, $ \forall k \in \mathbb{Z}^+$. All covariance matrices are assumed to be positive definite. $A(\phi_i)$ and $B(\phi_i)$ are constant matrices with appropriate dimensions. $\phi_i$ denotes the type of the $i^{th}$ agent drawn from a finite set $\Phi: = \{\phi_1, \cdots, \phi_m\}$ and is chosen according to the empirical distribution 
\begin{align}\label{Emp_D}
F_N(\phi) = \frac{1}{N}\sum_{i=1}^{N}\mathbb{I}_{\{\phi_i \leq \phi\}}, ~~\phi \in \Phi ,
\end{align} where $\mathbb{I}_{\{\cdot\}}$ is the indicator function. It is further assumed that $\lim\limits_{N \rightarrow \infty}{F_N(\phi)} = F(\phi)$ weakly, for some probability distribution $F(\phi)$ over the support of $\Phi$, with corresponding probability mass functions $P_N(\phi)$ and $P(\phi)$, respectively.

\par To complete the problem formulation, we define the information structure on each of the blocks in Fig. \ref{Inf_flow}. Such information structures are standard and appear in applications like industrial and process control \cite{ramesh2013design} and wireless sensor networks \cite{imer2010optimal}. 
\begin{table*}[tbp]
\centering
\begin{tabular}{|c|l|c|l|}
\hline
\textbf{Entity} & \textbf{Information State} & \textbf{Information Space} & \textbf{Input-output Map}\\
\hline
Scheduler & $I^{i,sc}_k \triangleq \left(X_{0:k}^i,Y_{0:k-1}^i\right)$ &  $\mathcal{I}^{i,sc}_k$ & $\xi_k^i: ~I^{i,sc}_k \mapsto \gamma_k$ \\ \hline
Encoder & $I_k^{i,\epsilon}  \triangleq \left(X_{0:k}^i,U_{0:k-1}^i,c_{\mathcal{K}(k-1)}^i,d_{\mathcal{K}(k-1)}^i,Y_{0:k-1}^i \right)$ & $\mathcal{I}_k^{i,\epsilon}$ & $\mathscr{E}_k^i: ~I_k^{i,\epsilon} \mapsto c_{k}^i.$\\ \hline
Decoder & $I_k^{i,d} \triangleq \left(Y_{0:k-1}^i,U_{0:k-1}^i,d_{\mathcal{K}(k)}^i \right)$ & $\mathcal{I}_k^{i,d}$ & $\mathscr{D}_k^i: ~I_k^{i,d} \mapsto Y_k^i$\\ \hline
Controller & $I_k^{i,\pi} \triangleq \left(U_{0:k-1}^i,Y_{0:k}^i \right)$ & $\mathcal{I}_k^{i,\pi}$ & $\pi_k^i: ~I_k^{i,\pi} \mapsto U_k^i$\\
\hline
\end{tabular}
\caption{Dictionary of Information states of entities in Fig. \ref{Inf_flow}}
\label{tab:Infostates}
\hrule
\end{table*}
First, we define a transmission time as an instant when information is sent over the channel. Let the history of transmission times up to the current instant $k$ ($k>0$) be denoted by the set $\mathcal{K}(k):= \{l \mid l \leq k, \gamma_l = 1\}$, where $\gamma_l$ denotes the transmission instant as formalized in the next paragraph. By convention, we take $\mathcal{K}(0)=\{0\}$. 
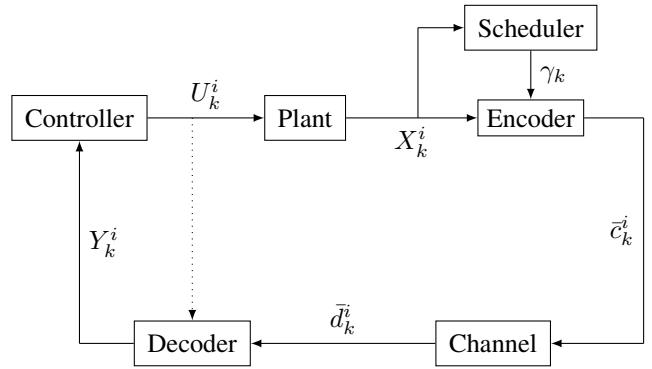
\begin{figure}[t]
    \centering
        \begin{tikzpicture}[node distance = 3cm,auto]
        \node[rect] (C) {Controller};
        \node[rect, right of = C] (P) {Plant};
        \node[switch, right of = P] (E) {Encoder};
        \node[right of=E, xshift=-1.5cm] (Aux1){};
        \node[below of=Aux1] (Aux2){};
        \node[below of=C] (Aux4){};
        \node[rect, above of = E, yshift=-1.8cm] (S) {Scheduler};
        \node[left of=S, xshift= 1.5cm] (Aux3){};
        \node[rect, below of = E, xshift=-0.5cm] (Ch) {Channel};
        \node[rect, left of = Ch, xshift=-1cm] (D) {Decoder};
        \path[line] (C) -- node[midway,above] {$U_k^i$} (P);
        \path[line] (Aux3.center) -- (S);
        \draw ($(P)!.5!(E)$) -- (Aux3.center);
        \draw [line] (P) -- node[midway,below] {$X_k^i$} (E);
        \draw (E) -- (Aux1.center);
        \draw (Aux1.center) -- node[midway,left] {$\bar{c}_k^i$} (Aux2.center);
        \path[line] (S) -- node[midway,right] {$\gamma_k$} (E);
        \path[line] (Aux2.center) -- (Ch);
        \path[line] (Ch) -- node[midway,above] {$\bar{d}_k^i$} (D);
        \draw (D.west) -- (Aux4.center);
        \path [dline] ($(C)!.5!(P)$) -- (D);
        \path[line] (Aux4.center) -- node[midway,right] {$Y_k^i$} (C);

        \end{tikzpicture}
        \caption{Closed-loop information flow for the $i^{th}$ agent	}
        \label{Inf_flow}
\end{figure}
\par The information states, information spaces, and the input-output maps defining the scheduler, encoder, decoder and controller are as defined in Table \ref{tab:Infostates}.
The scheduler has access to the history of plant states and the decoded outputs, based on which, it decides the transmission times of plant state through the channel. The decision whether to transmit or not is taken based on an (innovations based) threshold scheduling policy (of the form $\delta_k'S\delta_k \geq \alpha$, where $\delta_k$ is the error between plant and decoder output at instant $k$, to be defined later, $S > 0$ is a user-defined constant positive definite matrix and $\alpha > 0$ is the threshold parameter). Note that here, by innovations-based process, we mean, given a process $\{z_k\}$, the innovation process $\{\tilde{z}_k\}$ contains new information not carried in the sequences $z_{k-1}, z_{k-2}, \cdots$ \cite[Section 5.3]{anderson2012optimal}. We define $\gamma_k := \xi_k^i\left(I^{i,sc}_k\right)$, where $\gamma_k = 1$ signifies that $k$ is a transmission instant and $\gamma_k = 0$ signifies no transmissions ($\varphi$).

\par Next, the encoder transmits encoded state information ($c_k^i$) at transmission times over the channel. The signal $\bar{c}_k^i$ in Figure \ref{Inf_flow} is given as:
\begin{align*}
\bar{c}_k^i = \begin{cases}
c_k^i, & \text{if} ~~\gamma_k = 1, \\
\varphi, & \text{if} ~~\gamma_k = 0.
\end{cases}
\end{align*}
The encoder is assumed to have full knowledge of the system as in Table \ref{tab:Infostates}, which leads to better control performance compared to structures where only partial information is available \cite{tatikonda2004stochastic} (cf. \cite{tatikonda2004stochastic} for a detailed study on other encoder information structures and how they may be realized). Such a situation emerges when the encoder and the scheduler are collocated with the plant, and hence, can observe both its state as well as the control actions applied to it \cite{tatikonda2004stochastic}. In addition, we assume that the encoder is predictive, i.e., it transmits over the channel, functions of the true state minus the decoder output at the same instant. Such encoders are used in practice while encoding sequential data \cite{tatikonda1998control}.

It is imperative to note here that the above assumptions on the encoder and the scheduler information structures, as we prove in the next section, will entail no dual-effect of control, and hence, lead to a simple controller design. The dual effect of control refers to the dual role of the controller in the evolution of the system dynamics and to probe the scheduler and new measurements to reduce its uncertainty on the system state \cite{molin2012optimality}. Finally, we note, in the table, $c_{\mathcal{K}(k)}^i := \{c_l^i \mid l \leq k, \gamma_l = 1\}$ and $d_{\mathcal{K}(k)}^i := \{d_l^i \mid l \leq k, \gamma_l = 1\}$.

\par The encoded signal is sent over an AWGN channel, which is analog, memoryless and is modeled as:
\begin{align*}
d_{l}^i = c_{l}^i + v_{l}^i,~ l \in \mathcal{K}(k)
\end{align*}
where $v_{l}^i$ is an i.i.d. zero mean Gaussian process with finite positive-definite covariance $\Sigma_v$ and represents the channel noise.
The input and output alphabets of the channel lie in $\mathbb{R}^n$. The signal $\bar{d}_k^i$ in Figure \ref{Inf_flow} is given as:
\begin{align*}
\bar{d}_k^i = \begin{cases}
d_k^i, & \text{if} ~~\gamma_k = 1, \\
\varphi, & \text{if} ~~\gamma_k = 0.
\end{cases}
\end{align*}


\noindent Next, the decoder at the controller end serves two purposes. Firstly, it decodes the noisy channel output to produce a MMSE estimate \cite{tatikonda1998control} of the input signal whenever \textit{new} information is received via the channel. Secondly, between transmission times, it calculates a recursive estimate of the plant state to send information to the controller at all times $k$. Thus, the complete decoder mapping is given by
\begin{align}\label{Recursive}
Y_k^i = \begin{cases}
\mathbb{E}\{X_k^i|I_k^{i,d}\}, & \text{if} ~~\gamma_k = 1, \\
AY_{k-1}^i + BU_{k-1}^i, & \text{if} ~~\gamma_k = 0,
\end{cases}
\end{align} where $AY_{k-1}^i + BU_{k-1}^i$ is the recursive estimate calculated by the decoder between transmission instants, and  $Y_k^i$ is the input to the controller.

\par Finally, the controller calculates control actions by minimizing an infinite-horizon average cost function
\begin{align}\label{LQT}
&J_i^N(\pi^i,\pi^{-i}) :=  \\& \limsup_{T\rightarrow \infty}{\frac{1}{T}\mathbb{E}\left\lbrace \sum_{k=0}^{T-1}\|X_k^i-\frac{1}{N}\sum_{j=1}^{N}{X_k^j}\|^2_{Q(\phi_i)} + \|U_k^i\|^2_{R(\phi_i)}\right\rbrace} \nonumber
\end{align}
where $Q(\phi_i) \geq 0$, $R(\phi_i)>0$ and the parameter $\phi_i \in \Phi$ determines the tuple $(A(\phi), B(\phi), Q(\phi),R(\phi))$ for each agent. Further, $\pi^{-i}:= (\pi^1,\cdots, \pi^{i-1}, \pi^{i+1}, \cdots \pi^N)$, where $\pi^i_k$ is as defined in Table \ref{tab:Infostates}. The control law  for agent $i$ over the sequence of \emph{deterministic} control policies $\pi^i := (\pi_0^i,~\pi_1^i, \cdots, ) \in \bar{\mathcal{M}}_i$ and $\bar{\mathcal{M}}_i$ is the space of admissible decentralized control laws. 
The coupling between agents enters via the consensus term $\frac{1}{N}\sum_{j=1}^{N}{X_k^j}$ in the objective. Further, the cost incorporates a soft constraint on the control actions alongside penalizing state deviations from the consensus term, which each agent aims to track. Finally, the expectation in \eqref{LQT} is taken with respect to noise statistics and the initial state distribution.
\par Now, that the problem description is complete, the aim is to design a decentralized control policy for each agent in \eqref{system} minimizing its local objective \eqref{LQT}, which is done using the MFG framework in the next section.
\section{Mean-Field Games}\label{Sec3}
In this section, we solve the problem for the $N$-player system \eqref{system} with objective \eqref{LQT} by considering the limiting case as $N \rightarrow \infty$. In this setting, the consensus term in \eqref{LQT} can be approximated by a known deterministic sequence (also termed the mean-field trajectory) following  the  Nash  Certainty  Equivalence  Principle \cite{huang2006large}. This reduces the problem to a tracking control problem and a consistency condition as shown later. First, we obtain the solution (to a fully observed tracking problem constructed from a partially observed one) for this infinite agent (mean-field) system. This solution, called the MFE, consists of computing an equilibrium control policy and the equilibrium MF trajectory. Finally, we demonstrate its $\epsilon$--Nash property.
\subsection{Optimal Tracking Control}
Consider a generic agent (from an infinite population system) of type $\phi$ with dynamics 
\begin{align}\label{systemMF}
X_{k+1} = A(\phi)X_k + B(\phi)U_k + W_k, ~k \in \mathbb{Z}^+
\end{align}
where $X_k \in \mathbb{R}^n$ and $U_k \in \mathbb{R}^m$ are the state process and the control input, respectively. The initial state $X_0$ has mean $\nu_{\phi,0}$ and finite positive-definite covariance $\Sigma_x$. Further, $W_k \in \mathbb{R}^n$ is an i.i.d. Gaussian process with zero mean and finite positive-definite covariance $\Sigma$.
Let us denote the generic agent's controller information space at time $k$ as $\mathcal{I}_k^{\mu}$. Then, its information state at any time $k$ is $I_k^{\mu} \triangleq (U_{0:k-1},Y_{0:k}) \in \mathcal{I}_k^{\mu}$. Let us define the map
$
\mu_k: ~ \mathcal{I}_k^{\mu} \rightarrow \mathcal{U},
$ or more specifically, $\mu_k$ maps $I_k^{\mu}$ to $U_k$. 
The control law can then be given as $\mu:=(\mu_0, \mu_1\cdots,) \in \mathcal{M}$, where $\mathcal{M}$ denotes the admissible class of control laws. The objective function for the generic agent can be given as
\begin{align}\label{LQTMF}
&J(\mu,\bar{X}) \\& := \limsup_{T\rightarrow \infty}{\frac{1}{T}\mathbb{E}\left\lbrace \sum_{k=0}^{T-1}\|X_k-\bar{X}_k\|^2_{Q(\phi)} + \|U_k\|^2_{R(\phi)}\right\rbrace} \nonumber
\end{align}
where the expectation is taken over the noise statistics, initial state and the joint laws $\mu$. In addition, $\bar{X} = (\bar{X}_1,\bar{X}_2, \cdots)$ is the MF trajectory, which represents the infinite player approximation to the coupling term in \eqref{LQT}. The introduction of this term leads to indistinguishability between the agents, thereby making the effect of state deviations of individual agents negligible. Consequently, the game problem reduces to a stochastic optimal control problem for the generic agent followed by a consistency condition, whose solution is given by the MFE. Before we formally define the MFE, we state the following assumption on the mean-field system.
\begin{assumption}\label{Ass3}
\begin{enumerate}[(i)]
\item The pair $(A(\phi),B(\phi))$ is controllable and $(A(\phi),{Q(\phi)}^{\frac{1}{2}})$ is observable.
\item The MF trajectory $\bar{X} \in \mathcal{X}$, where $\mathcal{X} := \{\bar{X}_k \in \mathbb{R}^n : \|\bar{X}\|_{\infty}:= \sup_{k\geq 0}{\|\bar{X}_k\|} < \infty\}$ is the space of bounded vector-valued functions.
\end{enumerate}
\end{assumption}

\par Now, to define an MFE, we introduce two operators \cite{uz2020reinforcement}:
\begin{enumerate}
\item $\Psi: \mathcal{X} \rightarrow \mathcal{M}$ given by $\Psi(\bar{X}) = \argmin_{\mu \in \mathcal{M}}J(\mu,\bar{X})$, which outputs the optimal control policy minimizing the cost \eqref{LQTMF}, and

\item $\Lambda : \mathcal{M \rightarrow \mathcal{X}}$ given by $\Lambda(\mu) = \bar{X}$, also called the consistency operator, which generates a MF trajectory consistent with the optimal policy $\Psi(\bar{X})$ as obtained above.
\end{enumerate}

\par Then, we have the following definition.
\begin{definition}(Mean-field equilibrium \cite{uz2020reinforcement})
The pair $(\mu^*,\bar{X}^*) \in \mathcal{M} \times \mathcal{X} $ is an MFE if $\mu^* = \Psi(\bar{X}^*)$ and $\bar{X}^* = \Lambda(\mu^*)$. More precisely, $\bar{X}^*$ is a fixed point of the map $\Lambda \circ \Psi$.
\end{definition}
The trajectory $\bar{X}^*$ is the MF trajectory at equilibrium with $\mu^*$ as the equilibrium control policy.
The aim now is to design an optimal tracking control policy for \eqref{systemMF} minimizing \eqref{LQTMF} under the information structure discussed above.
\par Before proceeding, we point out that with some abuse of notation, we retain the same notations (as in Figure 1) of a generic agent's control system except by removing the superscript $i$. We now construct the fully observed problem (in Proposition 1 below) using decoder estimates ($Y_k$) from the noisy observations output from the channel. This will follow from two results, namely: the control policy is free of dual-effect \cite{bar1974dual,feldbaum1961dual}, and the optimal control policy is certainty equivalent, as we prove next. Define the error between the plant and the decoder output as:
\begin{align}\label{error_def}
\bar{e}_k:= X_k-Y_k= \left\{
\begin{array}{ll}
 e_k, & \gamma_k=1 \\
 \delta_k, & \gamma_k=0 
 \end{array},
\right.
\end{align} where from \eqref{Recursive}, we let $e_k:= X_k-\mathbb{E}\{X_k|I_k^{d},d_{k}\}$ and $\delta_k:= X_k-AY_{k-1} - BU_{k-1}$, as the error at the transmission and non-transmission times, respectively. 
Then, we have the following lemma.
\begin{lemma}\label{L1}
Consider the system \eqref{systemMF} for the generic agent under the information structure as in the previous section. Then:
\begin{enumerate}[(i)]
\item The relative error $\bar{e}_k$ is independent of all control choices for all $k$.
\item The expected value of $\bar{e}_k$ and the conditional correlation between $Y_k$ and $\bar{e}_k$ given $\mathcal{F}^d_k$, are both zero.
\end{enumerate}
Consequently, the control is free of dual-effect.
\end{lemma}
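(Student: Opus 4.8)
The plan is to argue by induction on $k$, exploiting a single algebraic cancellation that eliminates the control from the error recursion. The cornerstone is that at any non-transmission instant the prediction error obeys
\begin{align*}
\delta_k &= X_k - AY_{k-1} - BU_{k-1} \\
&= \left(AX_{k-1} + BU_{k-1} + W_{k-1}\right) - AY_{k-1} - BU_{k-1} \\
&= A\bar{e}_{k-1} + W_{k-1},
\end{align*}
so that the control term $BU_{k-1}$ cancels identically. Since the predictive encoder transmits (a function of) the true state minus the decoder's prediction at the same instant --- namely $\delta_k$ --- the channel output $d_k = c_k + v_k$ carries only this control-free quantity corrupted by the control-free channel noise $v_k$. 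I would carry the induction hypothesis that $\bar{e}_{k-1}$ is a measurable function of the exogenous primitives $X_0, \{W_j\}, \{v_j\}$ alone, hence unaffected by the choice of admissible control law.

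For part (i), the base case $k=0$ holds because $\mathcal{K}(0)=\{0\}$ forces a transmission, so $\bar{e}_0 = X_0 - \mathbb{E}\{X_0\mid I_0^{d}\}$ involves no control (none has been applied). For the inductive step I split on the scheduler decision. If $\gamma_k=0$ then $\bar{e}_k = \delta_k = A\bar{e}_{k-1}+W_{k-1}$ is control-free by hypothesis; if $\gamma_k=1$ then $\bar{e}_k = e_k = \delta_k - \mathbb{E}\{\delta_k \mid \mathcal{F}^d_{k-1}, d_k\}$, and since both $\delta_k = A\bar{e}_{k-1}+W_{k-1}$ and $d_k = \delta_k + v_k$ are control-free, so is the MMSE residual $e_k$. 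The delicate point is that the transmission instants must themselves be control-free: the innovations-based threshold rule $\delta_k' S \delta_k \geq \alpha$ reads only the quantity $\delta_k$, which the same induction has just shown to be control-free, so the indicator $\gamma_k$ is control-free and the case split is legitimate as a joint induction over the error and the schedule.

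For part (ii), the zero-mean claim follows along the same induction: at a transmission time $\mathbb{E}\{\bar{e}_k\} = \mathbb{E}\{X_k - \mathbb{E}\{X_k\mid I_k^{d}\}\} = 0$ by the tower property, while at a non-transmission time $\mathbb{E}\{\bar{e}_k\} = A\,\mathbb{E}\{\bar{e}_{k-1}\} + \mathbb{E}\{W_{k-1}\} = 0$. For the conditional correlation I would first verify $Y_k = \mathbb{E}\{X_k \mid \mathcal{F}^d_k\}$ in \emph{both} branches of \eqref{Recursive}: this is immediate when $\gamma_k=1$, and when $\gamma_k=0$ one computes $\mathbb{E}\{X_k \mid \mathcal{F}^d_k\} = A\,\mathbb{E}\{X_{k-1}\mid\mathcal{F}^d_k\} + BU_{k-1} + \mathbb{E}\{W_{k-1}\mid\mathcal{F}^d_k\} = AY_{k-1} + BU_{k-1}$, using that no new channel data arrives (so $Y_{k-1} = \mathbb{E}\{X_{k-1}\mid\mathcal{F}^d_k\}$) and that $W_{k-1}$ is independent of the decoder filtration. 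Since $Y_k$ is $\mathcal{F}^d_k$-measurable, it follows that $\mathbb{E}\{\bar{e}_k\mid\mathcal{F}^d_k\} = \mathbb{E}\{X_k\mid\mathcal{F}^d_k\} - Y_k = 0$, whence $\mathbb{E}\{Y_k\bar{e}_k' \mid \mathcal{F}^d_k\} = Y_k\,\mathbb{E}\{\bar{e}_k'\mid\mathcal{F}^d_k\} = 0$.

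The dual-effect conclusion is then read off from part (i) via the Bar--Shalom--Tse characterization, by which a control is free of dual effect precisely when the conditional error covariance is unaffected by past controls. As all primitives are Gaussian and every map in the loop is affine, joint Gaussianity is preserved, so the MMSE estimate coincides with the linear conditional mean and the conditional covariance of $\bar{e}_k$ is deterministic; by (i) it is a fixed function of the exogenous primitives irrespective of the applied control, hence control-invariant, giving the claim. The principal obstacle I anticipate is exactly the joint nature of the part-(i) induction: the estimation-error recursion and the transmission schedule cannot be decoupled, since the schedule depends on $\delta_k$ while the error at transmission times depends on which instants transmitted, and the argument closes only because the threshold test reads a quantity the induction has already certified as control-free.
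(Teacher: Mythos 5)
Your proof is correct and follows essentially the same route as the paper's: the key step in both is that the control term $B(\phi)U_{k-1}$ cancels in the error recursion because the decoder's recursive estimate contains it, an induction (with base case at $l_0=0$) then propagates control-independence of the error and of the innovations-based schedule, and part (ii) follows from the $\mathcal{F}^d_k$-measurability of $Y_k$ together with the zero conditional mean of the error. The only difference is organizational: you run a single joint induction over all $k$ with a case split on $\gamma_k$ (explicitly flagging that the schedule itself must be certified control-free), whereas the paper first treats transmission instants by induction over the transmission index and then reduces the inter-transmission instants to that case.
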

\begin{proof}
We prove the Lemma in two parts, first for transmission and then for non-transmission times.

\noindent \textbf{Part 1:}  Consider a transmission instant $l_r \in \mathcal{K}(k)$. Then, we have
\begin{align*}
\bar{e}_{l_{r+1}} = e_{l_{r+1}} &= X_{l_{r+1}}-Y_{l_{r+1}}, \\
&= X_{l_{r+1}}-\mathbb{E}\left\lbrace X_{l_{r+1}} |I^d_{l_{r+1}}\right\rbrace,\\
&= A(\phi)e_{l_{r+1}-1} + W_{l_{r+1}-1} \\ & \qquad - \mathbb{E}\left\lbrace A(\phi)e_{l_{r+1}-1} + W_{l_{r+1}-1}|I^d_{l_{r+1}}\right\rbrace.
\end{align*}
Using simple manipulations, the error at the $l_{r+1}-$th transmission time can be recursively in terms of the preceding one as
\begin{align}\label{e_tl+1}
e_{l_{r+1}} &= \eta_1(\phi) e_{l_{r}} + \eta_2(\phi) W_{l_{r}:l_{r+1}-1} \nonumber \\
& - \mathbb{E}\left\lbrace\eta_1(\phi)e_{l_{r}} + \eta_2(\phi) W_{l_{r}:l_{r+1}-1}|I^d_{l_{r+1}}\right\rbrace,
\end{align}
for matrices $\eta_1(\phi)$ and $\eta_2(\phi)$ of appropriate dimensions.
\par Now, we prove that $e_{l_{r}}$ is independent of control actions by induction on the parameter $l_{r}$. Note that the first event time is assumed to be $l_0=0$. Fix $Y_{l_0} = 0$. Then, $e_{l_0} = X_{l_0}$ is independent of all control actions. Now, assume that $e_{l_r}$ is independent of all controls. Now, by assumption, we have that the process noise is independent of controls, $\forall l_r$. Consequently, from \eqref{e_tl+1} and the induction hypothesis, we have that $e_{l_{r+1}}$ is independent of all controls. Finally, by the principle of mathematical induction, (i) holds for Part 1.

\par Next, since $e_{l_r} = X_{l_{r}}-\mathbb{E}\{ X_{l_{r}} |I^d_{l_{r}}\}$, we have $\mathbb{E}\{e_{l_{r}} |I^d_{l_{r}}\} = 0.$ Also, $\mathbb{E}\{ Y_{l_r}e_{l_{r}}' |I^d_{l_{r}}\} =\mathbb{E} \{ Y_{l_r}e_{l_{r}}' |I^d_{l_{r}},Y_{l_r}\} = Y_{l_r}\mathbb{E}\{ e_{l_{r}}' |I^d_{l_{r}}\} = 0,$ where the second equality follows since $Y_{l_r} = \mathscr{D}_{l_r}(I^d_{l_{r}})$ is $\sigma\left(I^d_{l_{r}} \right)$-measurable (the sigma-algebra generated by $I^d_{l_{r}}$). Hence, (ii) holds.

\noindent \textbf{Part 2:} Now, we prove (i) and (ii) for the non-transmission times. Suppose $l_r$ and $l_{r+1}$ are some consecutive transmission times. Then, for any $k \in (l_r, l_{r+1})$, $\bar{e}_k = \delta_k = \bar{\eta}_1(\phi)e_{l_r} + \bar{\eta}_2(\phi)W_{l_r:k-1}$, for appropriate matrices $\bar{\eta}_1(\phi)$ and $\bar{\eta}_2(\phi)$. Then, by Part 1, (i) holds for all non-transmission times.
%
\par Next, we prove (ii). It is easy to see that the information state of the decoder is updated with \textit{new} information only at the transmission instants. More specifically, any estimate between two transmission times $l_r$ and $l_{r+1}$, for some $r$, can be recovered from its information state $I^d_{l_r}$. Thus, 
\begin{align*}
\mathbb{E}\left\lbrace\delta_k|I^d_k\right\rbrace &= \mathbb{E}\left\lbrace\bar{\eta}_1(\phi)e_{l_r} + \bar{\eta}_2(\phi)W_{l_r:k-1}|I^d_k\right\rbrace \\
& = \mathbb{E}\left\lbrace\bar{\eta}_1(\phi)e_{l_r} + \bar{\eta}_2(\phi)W_{l_r:k-1}|I^d_{l_r}\right\rbrace \\
& = \mathbb{E}\left\lbrace\bar{\eta}_1(\phi)e_{l_r}|I^d_{l_r}\right\rbrace + \mathbb{E}\left\lbrace\bar{\eta}_2(\phi)W_{l_r:k-1}|I^d_{l_r}\right\rbrace \\
&= 0.
\end{align*}
The last equality follows from Part 1 and the fact that $W_k$ is an independent zero mean process.
Finally, $\mathbb{E}[Y_k\delta_k' |I^d_k] = 0$ follows in the same manner as in Part 1. Thus (ii) holds and the proof is complete.
\end{proof}
Note that the proof of Lemma 1 is made possible due to the information structure of the scheduler, encoder and decoder. Since the information maps of the scheduler and the controller entail partially nested $\sigma$-algebras, the scheduler is able to recover the controller output information at its own end. This, in addition to the deterministic nature of control policies, allows the scheduler to compute the scheduling policy based on innovations and consequently take complete authority over transmission of \textit{new} information. As a result, the control is dual-effect free.


\par Next, we prove that due to the no dual-effect, the separation principle holds for the underlying tracking problem. Consequently, the optimal control law under the information structure in Table \ref{tab:Infostates}, is certainty equivalent \cite{ramesh2013design}.

%
\begin{theorem}\label{Th1}
Consider the information structure on the generic agent as in Table \ref{tab:Infostates}. Then, the control design problem separates into designing a state decoder and a certainty equivalent controller.
\end{theorem}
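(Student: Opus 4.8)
The plan is to use Lemma~\ref{L1} to split the cost \eqref{LQTMF} into a control-dependent part living on the decoder output $Y_k$ and a control-free estimation part, and then to run a backward dynamic-programming argument identifying the minimizer of the former as the full-information feedback law evaluated at $Y_k$. First I would write $X_k-\bar{X}_k=(Y_k-\bar{X}_k)+\bar{e}_k$ and expand the running state penalty,
\begin{align*}
\|X_k-\bar{X}_k\|^2_{Q(\phi)} &= \|Y_k-\bar{X}_k\|^2_{Q(\phi)}+\|\bar{e}_k\|^2_{Q(\phi)}\\
&\quad+2(Y_k-\bar{X}_k)'Q(\phi)\bar{e}_k .
\end{align*}
Since $\bar{X}_k$ is deterministic and $Y_k$ is $\Fs^d_k$-measurable, Lemma~\ref{L1}(ii), i.e.\ $\EE\{\bar{e}_k\}=0$ together with $\EE\{Y_k\bar{e}_k'\mid\Fs^d_k\}=0$, forces the expectation of the cross term to vanish. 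Hence \eqref{LQTMF} decomposes as $J(\mu,\bar{X})=J_c(\mu,\bar{X})+J_e$, where $J_c$ retains the terms in $Y_k$ and $U_k$, and $J_e=\limsup_{T\to\infty}\frac{1}{T}\EE\{\sum_{k=0}^{T-1}\|\bar{e}_k\|^2_{Q(\phi)}\}$ depends only on the estimation error.

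The separation of estimator and controller then follows at once: by Lemma~\ref{L1}(i) the error $\bar{e}_k$, and therefore $J_e$, is unaffected by any admissible control, so $\argmin_{\mu\in\Ms}J(\mu,\bar{X})=\argmin_{\mu\in\Ms}J_c(\mu,\bar{X})$. The decoder can thus be synthesized to return the MMSE estimate $Y_k$ with no reference to the controller, while the controller is left to minimize $J_c$, which penalizes only the deviation of $Y_k$ from the mean-field trajectory and the control effort.

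To obtain certainty equivalence I would first confirm that $Y_k$ is the controller's sufficient statistic, i.e.\ $\EE\{X_k\mid I_k^{\mu}\}=Y_k$. Because $\sigma(I_k^{\mu})\subseteq\sigma(I_k^{d})$ and $Y_k$ is $\sigma(I_k^{\mu})$-measurable, the tower property gives $\EE\{X_k\mid I_k^{\mu}\}=\EE\{Y_k\mid I_k^{\mu}\}=Y_k$ at transmission instants, and the predictive branch of \eqref{Recursive} together with the zero-mean independence of $W_k$ propagates this identity between transmissions. I would then treat the finite-horizon truncation, positing a quadratic cost-to-go $V_k(x)=\|x-s_k\|^2_{P_k}+\text{const}$ and minimizing $\EE\{V_{k+1}(X_{k+1})+\|U_k\|^2_{R(\phi)}\mid I_k^{\mu}\}$ at each stage. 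The quadratic term splits as $\|\EE\{X_{k+1}\mid I_k^{\mu}\}-s_{k+1}\|^2_{P_{k+1}}$ plus a conditional error covariance that, by Lemma~\ref{L1}(i), carries no control; the $U_k$-dependent part of the first summand coincides with that of the fully observed problem with $X_{k+1}$ replaced by its estimate. The minimizer is therefore the full-information feedback evaluated at $Y_k$, and the gains obey the same Riccati recursion, whose stationary limit exists and is stabilizing under Assumption~\ref{Ass3}(i) as $T\to\infty$.

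The hard part will be the two bookkeeping steps that make this rigorous rather than formal: (a) verifying that $Y_k=\EE\{X_k\mid I_k^{\mu}\}$ holds uniformly across the switching between the MMSE and predictive branches of \eqref{Recursive}, so that the conditional second moment of $X_{k+1}$ splits cleanly at every stage; and (b) checking that the control-free error-covariance terms discarded in the minimization are not silently reintroduced through the innovations injected at transmission instants. Step (b) is precisely where the dual-effect-free conclusion of Lemma~\ref{L1} is indispensable, since it guarantees that those innovations, and the resulting jumps in the estimate $Y_k$, contain no information about past controls, leaving the Riccati recursion of the certainty-equivalent design intact.
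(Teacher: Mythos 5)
Your proposal is correct and takes essentially the same route as the paper: the paper's proof likewise writes $X_k-\bar{X}_k=(Y_k-\bar{X}_k)+\bar{e}_k$, expands the (conditional) stage cost, kills the cross terms via Lemma~\ref{L1}(ii), and observes that the leftover term $\tr\{Q(\phi)\Delta_k\}$ is control-independent by Lemma~\ref{L1}(i) --- precisely your $J=J_c+J_e$ split. The backward dynamic-programming/Riccati portion you sketch is not part of the paper's proof of Theorem~\ref{Th1}; the paper defers that content to Proposition~\ref{Prop1}, where it is dispatched by citation to standard LQ tracking results.
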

\begin{proof}
Consider the cost-to-go as follows:
\begin{align*}
&\mathbb{E}\{(X_k-\bar{X}_k)'Q(\phi)(X_k-\bar{X}_k)+U_k'R(\phi)U_k|I_k^{\mu}\} \\
=&\mathbb{E}\left\lbrace(Y_k-\bar{X}_k+\bar{e}_k)'Q(\phi)(Y_k-\bar{X}_k+\bar{e}_k) \right. \\& \left. + U_k'R(\phi)U_k|I_k^{\mu}\right\rbrace,\\
=&(Y_k-\bar{X}_k)'Q(\phi)(Y_k-\bar{X}_k) + \mathbb{E}\{(Y_k-\bar{X}_k)'Q(\phi)e_k|I_k^{\mu}\} \\& +\mathbb{E}\{\bar{e}_k'Q(\phi)(Y_k-\bar{X}_k)|I_k^{\mu}\} + \mathbb{E}\{\bar{e}_k'Q(\phi)\bar{e}_k|I_k^{\mu}\} \\& + \mathbb{E}\{U_k'R(\phi)U_k|I_k^{\mu}\} , \\
=&(Y_k-\bar{X}_k)'Q(\phi)(Y_k-\bar{X}_k) + \mathbb{E}\{\bar{e}_k'Q(\phi)\bar{e}_k|I_k^{\mu}\} \\& + \mathbb{E}\{U_k'R(\phi)U_k|I_k^{\mu}\} , \\
=&(Y_k-\bar{X}_k)'Q(\phi)(Y_k-\bar{X}_k) + \mathbb{E}\{U_k'R(\phi)U_k|I_k^{\mu}\} \\& + Tr\{Q(\phi)\Delta_k\},
\end{align*}
where $\Delta_k = \left\{
\begin{array}{ll}
 \mathbb{E}\{e_ke_k'|Y_{0:k}\}, & \gamma_k=1 \\
 \mathbb{E}\{\delta_k\delta_k'|Y_{0:k}\}, & \gamma_k=0 
 \end{array}.
\right.$
The third equality follows from Lemma \ref{L1} and the fact that $\bar{X}_k$ is deterministic and independent of $\bar{e}_k$. The last inequality follows again since the relative error is independent of control actions by Lemma \ref{L1}. Then, since the error-induced term $Tr\{Q(\phi)\Delta_k\}$ is independent of controls, the proof is complete.
\end{proof}

Certainty equivalence of the optimal control policy is a consequence of no dual-effect \cite{bar1974dual,feldbaum1961dual} of control as in Lemma \ref{L1}. When the control is free of dual-effect, the covariance of the estimation error is independent of the control signals used.  Thus, the controller can no longer benefit from probing the scheduler for information and can be designed independently of the scheduler and the decoder.
We are now ready to state the following Proposition.
\begin{proposition}[Separated Stochastic Optimal Control Problem]\label{Prop1}
Using Theorem \ref{Th1}, the fully observed system can be constructed using the partially observed one (due to the presence of noisy channel) as: 
\begin{align}\label{Sep_sys}
Y_{k+1} =\left\{
\begin{array}{ll}
 A(\phi)Y_{k} +B(\phi)U_k + \bar{W}_k, & \gamma_{k+1}=1 \\
 A(\phi)Y_{k} +B(\phi)U_k, & \gamma_{k+1}=0
 \end{array},
\right.
\end{align}where $\bar{W}_k = A(\phi)\bar{e}_k + W_k - \bar{e}_{k+1}$, with the associated cost-to-go $(Y_k-\bar{X}_k)'Q(\phi)(Y_k-\bar{X}_k) + \mathbb{E}\{U_k'R(\phi)U_k|I_k^{\mu}\} + Tr\{Q(\phi)\Delta_k\}.$ Then, under Assumption \ref{Ass3},
\begin{enumerate}[(i)]
\item The optimal control policy for the separated problem \eqref{Sep_sys} is given as
\begin{align}\label{Cntrl}
U^{*}_k &= -\Pi(\phi)Y_k - \Gamma(\phi)g_{k+1}
\end{align}
where $\Gamma(\phi) = (R(\phi)+B(\phi)'K(\phi)B(\phi))^{-1}B(\phi)'$, $\Pi(\phi) = \Gamma(\phi)K(\phi)A(\phi)$. Further, $K(\phi)$ is the unique positive definite solution to the algebraic Riccati equation
\begin{align}\label{RE1}
K(\phi) = A(\phi)'K(\phi)A(\phi) - & A(\phi)'K(\phi)'B(\phi)\Pi (\phi)\nonumber \\ &+ Q(\phi)
\end{align} and the trajectory $g_k$ is given as
\begin{align}\label{RE2}
g_k = H(\phi)'g_{k+1}- Q(\phi)\bar{X}_k,
\end{align} where $H(\phi)':= A(\phi)'[I-K(\phi)B(\phi)\Gamma (\phi)]$ is Hurwitz.
\end{enumerate}
\end{proposition}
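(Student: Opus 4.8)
The plan is to verify the separated dynamics \eqref{Sep_sys} first, and then to solve the resulting \emph{fully observed} LQ tracking problem by dynamic programming. For the dynamics, I would substitute $X_k = Y_k + \bar{e}_k$ (from \eqref{error_def}) into the plant equation \eqref{systemMF} to get $X_{k+1} = A(\phi)Y_k + B(\phi)U_k + A(\phi)\bar{e}_k + W_k$; subtracting $\bar{e}_{k+1}$ and using $Y_{k+1} = X_{k+1} - \bar{e}_{k+1}$ at transmission times and the recursive estimate $Y_{k+1} = A(\phi)Y_k + B(\phi)U_k$ at non-transmission times (both from \eqref{Recursive}) yields \eqref{Sep_sys} with $\bar{W}_k = A(\phi)\bar{e}_k + W_k - \bar{e}_{k+1}$. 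By Lemma \ref{L1}(i), $\bar{W}_k$ is independent of every control choice; moreover, since $\sigma(I_k^{\mu}) \subseteq \sigma(I_k^d) \subseteq \sigma(I_{k+1}^d)$, Lemma \ref{L1}(ii) together with the tower property gives $\EE\{\bar{W}_k \mid I_k^{\mu}\} = 0$. Combined with Theorem \ref{Th1}, the per-stage cost then reduces to the tracking form $(Y_k - \bar{X}_k)'Q(\phi)(Y_k - \bar{X}_k) + \|U_k\|^2_{R(\phi)}$ plus the control-independent term $\tr\{Q(\phi)\Delta_k\}$, so the problem is a genuine fully observed LQ tracking problem in the state $Y_k$.

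Next, I would posit a (relative) value function of the quadratic-affine form $V_k(Y_k) = Y_k'K(\phi)Y_k + 2g_k'Y_k + c_k$ and insert it into the Bellman equation. Because $\bar{W}_k$ is control-independent with $\EE\{\bar{W}_k \mid I_k^{\mu}\} = 0$, its only contribution to $\EE\{V_{k+1}(Y_{k+1}) \mid I_k^{\mu}\}$ is the additive, control-free constant $\tr\{K(\phi)\,\EE[\bar{W}_k\bar{W}_k' \mid I_k^{\mu}]\}$; hence minimizing over $U_k$ amounts to minimizing the strictly convex quadratic $U_k'R(\phi)U_k + (A(\phi)Y_k + B(\phi)U_k)'K(\phi)(A(\phi)Y_k + B(\phi)U_k) + 2g_{k+1}'(A(\phi)Y_k + B(\phi)U_k)$. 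Setting its gradient in $U_k$ to zero gives $(R(\phi) + B(\phi)'K(\phi)B(\phi))U_k = -B(\phi)'K(\phi)A(\phi)Y_k - B(\phi)'g_{k+1}$, i.e.\ exactly \eqref{Cntrl} after reading off $\Gamma(\phi)$ and $\Pi(\phi)$.

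I would then substitute $U_k^*$ back and match coefficients in $Y_k$. The quadratic terms produce the Joseph form $K = Q + \Pi'R\Pi + (A - B\Pi)'K(A - B\Pi)$ (suppressing $\phi$), which I would rewrite as the standard algebraic Riccati equation \eqref{RE1} using the identity $(R + B'KB)\Gamma = B'$. The linear-in-$Y_k$ terms give a backward recursion for $g_k$ whose coefficient of $g_{k+1}$ is $\Pi'R\Gamma + (A - B\Pi)'(I - KB\Gamma)$; the same identity $(R + B'KB)\Gamma = B'$ forces the $\Pi'(\cdot)$ contribution to cancel, collapsing the coefficient to $A'(I - KB\Gamma) = H(\phi)'$ and hence giving \eqref{RE2}.

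Finally I would invoke Assumption \ref{Ass3}. Part (i) (controllability of $(A,B)$ and observability of $(A, Q^{1/2})$) guarantees, by standard discrete-time ARE theory, that \eqref{RE1} has a unique positive-definite solution $K(\phi)$ with the closed-loop matrix $H(\phi) = A(\phi) - B(\phi)\Pi(\phi)$ stable (spectral radius below one, termed Hurwitz here). Part (ii) (boundedness of $\bar{X}$) combined with this stability lets me solve \eqref{RE2} as the absolutely convergent series $g_k = -\sum_{j \geq 0}(H(\phi)')^j Q(\phi)\bar{X}_{k+j}$, which is bounded uniformly in $k$. I expect the main obstacle to be the rigorous infinite-horizon average-cost verification: confirming that the stationary quadratic-affine ansatz truly is the relative value function and that the induced stationary feedback is optimal rather than merely a stationary point, which requires controlling the $\limsup_{T \to \infty}\frac{1}{T}(\cdot)$ limit and appealing to stabilizability/detectability. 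By contrast, the coefficient-matching algebra is routine once the identity $(R + B'KB)\Gamma = B'$ is available.
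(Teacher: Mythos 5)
Your proposal is correct and follows essentially the same route as the paper: the paper's own proof likewise obtains the separated system \eqref{Sep_sys} from \eqref{systemMF}, \eqref{error_def} and Theorem \ref{Th1}, and then simply cites \cite{bertsekas2000dynamic} and \cite{moon2014discrete} for the LQ-tracking solution, which is exactly the quadratic-affine dynamic-programming argument (Riccati equation, $g_k$ recursion, and stability under Assumption \ref{Ass3}) that you carry out explicitly. The average-cost verification you flag as the remaining obstacle is precisely what the paper outsources to those references rather than proving itself, so your write-up is, if anything, more self-contained than the paper's.
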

\begin{proof}
The separated problem follows from \eqref{systemMF} and \eqref{error_def} and Theorem \ref{Th1}. The rest of the proof follows from \cite{bertsekas2000dynamic} and \cite{moon2014discrete} and is thus omitted.
\end{proof}
\subsection{MFE Analysis}
Using Proposition 1, we now prove the existence and uniqueness of the MFE by introducing an operator $\mathcal{T}$ as shown in this section. First note from \eqref{RE2} that $H(\phi) = A(\phi)-B(\phi)\Pi(\phi)$. Then, substituting \eqref{Cntrl} in \eqref{Sep_sys}, we arrive at the closed-loop system as:
\begin{align*}
& Y_{k+1} = \left\{
\begin{array}{ll}
  H(\phi)Y_{k} - B(\phi)\Gamma(\phi)g_{k+1} + \bar{W}_k, &\gamma_{k+1} = 1\\
 H(\phi)Y_{k} - B(\phi)\Gamma(\phi)g_{k+1}, &\gamma_{k+1} = 0
 \end{array}.
\right.
\end{align*}
Then, using $\bar{W}_k$ from Proposition \ref{Prop1} and \eqref{error_def}, we can rewrite the above closed-loop system as 
\begin{align}\label{X1}
X_{k+1} &= H(\phi)X_k - B(\phi)\Gamma(\phi)g_{k+1} + B(\phi)\Pi(\phi)\bar{e}_k + W_k.
\end{align}
It is to be noted that $g_k \in \mathcal{X}$, if $g_0 = -\sum_{j=0}^{\infty}{(H^j(\phi))'}Q(\phi)\bar{X}_j$, which further gives $g_k = -\sum_{j=k}^{\infty}{(H^{j-k}(\phi))'}Q(\phi)\bar{X}_j$. Substituting this value of $g_k$ in \eqref{X1}, we get
\begin{align*}
X_{k+1} &= H(\phi)X_k + B(\phi)\Gamma(\phi)\sum_{j=k+1}^{\infty}{(H^{j-k-1}(\phi))'}Q(\phi)\bar{X}_j \\&+ B(\phi)\Pi(\phi)\bar{e}_k + W_k.
\end{align*}
Now, taking expectation on both sides and denoting $\hat{X}_k(\phi) = \mathbb{E}\{X_k\}$ as the aggregate trajectory of agents of type $\phi$, we get 
\begin{align}\label{X2}
& \hat{X}_{k+1}(\phi)= \\ & H(\phi)\hat{X}_k(\phi) + B(\phi)\Gamma(\phi)\sum_{j=k+1}^{\infty}{(H^{j-k-1}(\phi))'Q(\phi)\bar{X}_j}, \nonumber
\end{align}
where we use the tower property of conditional expectation and Lemma \ref{L1}, to get $\mathbb{E}\{\bar{e}_k\} = \mathbb{E}\{\mathbb{E}\{\bar{e}_k|I_{k}^d\}\}= 0$. Finally, \eqref{X2} can be simplified further as:
\begin{align}
& \hat{X}_{k}(\phi) = H^k(\phi)\nu_{\phi,0}\\ &+ \sum_{j=0}^{k-1}{H^{k-j-1}(\phi)B(\phi)\Gamma(\phi)\sum_{s=j+1}^{\infty} (H^{s-j-1}(\phi))'Q(\phi)\bar{X}_s}. \nonumber
\end{align}
\par Now, using the emprical distribution \eqref{Emp_D}, define the operator $\mathcal{T}$ as:
\begin{align}\label{MFS}
\mathcal{T}(\bar{X})(k) := \sum_{\phi \in \Phi}{\hat{X}_{k}(\phi)P(\phi)} ,
\end{align} where $\mathcal{T}(\bar{X})(\cdot)$ maps the input sequence to another sequence at time $k$. Using this operator, we prove existence and uniqueness of the  equilibrium MF trajectory by finding the fixed point of \eqref{MFS}, under the following assumption.
\begin{assumption}\label{Ass4}
We assume $\Xi := \|H(\phi)\| +\zeta <1, \forall \phi$, where $\zeta := \sum_{\phi \in \Phi}{\frac{\|Q(\phi)\|\|B(\phi)\Gamma(\phi)\|}{(1-\|H(\phi)\|)^2}P(\phi)} $.
\end{assumption}
Assumption \ref{Ass4} is motivated by results from existing literature \cite{huang2007large,moon2014discrete,uz2020reinforcement}. While it is stronger than the ones in \cite{uz2020approximate,moon2014discrete}, it entails linear MF trajectory dynamics, which is easily tractable.
\begin{theorem}\label{Th2}
Under Assumptions \ref{Ass3}-\ref{Ass4}, the following hold true:
\begin{enumerate}[(i)]
\item The operator $\mathcal{T}(\bar{X}) \in \mathcal{X}, ~~\forall \bar{X} \in \mathcal{X}$. Furthermore, there exists unique $\bar{X}^* \in  \mathcal{X} $ such that $\mathcal{T}(\bar{X}^*) = \bar{X}^*$.
\item $\bar{X}_{k}^*$ follows linear dynamics, i.e., $\exists ~ L^* \in \mathcal{L}:= \{L\in \mathbb{R}^{n\times n}:~ \|L\|\leq 1, \bar{X}_{k+1}^* = L\bar{X}_k^*\}$, where $\bar{X}^*_k$ is the aggregate trajectory of the agents at equilibrium, and $\bar{X}_0^* = \sum_{\phi \in \Phi}{\nu_{\phi,0}P(\phi)}$.
\end{enumerate}
\end{theorem}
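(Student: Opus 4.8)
The plan is to prove (i) via the Banach fixed-point theorem on the Banach space $(\Xs,\|\cdot\|_\infty)$, which is complete as the space of bounded $\mathbb{R}^n$-valued sequences. First I would verify that $\mathcal{T}$ in \eqref{MFS} is a self-map of $\Xs$. The initial-condition contribution $\sum_{\phi}P(\phi)H^k(\phi)\nu_{\phi,0}$ is bounded because Assumption \ref{Ass4} forces $\|H(\phi)\|<1$ for every $\phi$ (since $\Xi=\|H(\phi)\|+\zeta<1$ with $\zeta>0$), so $\|H^k(\phi)\nu_{\phi,0}\|\le\|H(\phi)\|^k\|\nu_{\phi,0}\|$. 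The forcing term is bounded by collapsing the two nested geometric series $\sum_{j}\|H(\phi)\|^{k-1-j}$ and $\sum_{s}\|H(\phi)\|^{s-j-1}$, each bounded by $(1-\|H(\phi)\|)^{-1}$; this yields the per-type factor $\|Q(\phi)\|\,\|B(\phi)\Gamma(\phi)\|(1-\|H(\phi)\|)^{-2}$ and, after the $P(\phi)$-weighted sum, the bound $\zeta\|\bar{X}\|_\infty<\infty$. Hence $\mathcal{T}(\bar{X})\in\Xs$.

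Next, since $\mathcal{T}$ is affine in $\bar{X}$ and the initial-condition term does not depend on $\bar{X}$, for any $\bar{X},\bar{X}'\in\Xs$ the difference $\mathcal{T}(\bar{X})(k)-\mathcal{T}(\bar{X}')(k)$ retains only the forcing term acting on $\bar{X}-\bar{X}'$. The identical geometric estimates then give $\|\mathcal{T}(\bar{X})-\mathcal{T}(\bar{X}')\|_\infty\le\zeta\|\bar{X}-\bar{X}'\|_\infty$. Because $\zeta<\Xi<1$, the operator $\mathcal{T}$ is a contraction, and completeness of $\Xs$ delivers a unique fixed point $\bar{X}^*\in\Xs$, proving (i).

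For (ii), the plan is to realize $\bar{X}^*$ as the trajectory of a single first-order linear recursion by a secondary fixed-point argument on the set $\mathcal{L}$. The key device is the usual LQ costate decoupling: for a candidate $L$ with $\|L\|\le1$, I would posit $g_k(\phi)=-M_L(\phi)\bar{X}_k$, where substitution into the costate recursion \eqref{RE2} requires $M_L(\phi)$ to solve the Stein equation $M_L(\phi)=H(\phi)'M_L(\phi)L+Q(\phi)$; this has the unique convergent solution $M_L(\phi)=\sum_{t\ge0}(H^t(\phi))'Q(\phi)L^t$ since $\|H(\phi)\|<1$ and $\|L\|\le1$. Feeding this into the per-type closed-loop recursion \eqref{X2} and aggregating with weights $P(\phi)$ induces a one-step map $L\mapsto\Theta(L)$ on $\mathcal{L}$. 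I would then show $\Theta$ maps $\mathcal{L}$ into itself (the bound $\|\Theta(L)\|\le1$ again following from the geometric estimates controlled by Assumption \ref{Ass4}) and is a contraction in the operator norm, yielding a unique $L^*\in\mathcal{L}$; finally I would check that the trajectory $\bar{X}_k=(L^*)^k\bar{X}_0^*$ with $\bar{X}_0^*=\sum_{\phi}\nu_{\phi,0}P(\phi)$ is a fixed point of $\mathcal{T}$, so that uniqueness from (i) forces it to coincide with $\bar{X}^*$ and gives $\bar{X}^*_{k+1}=L^*\bar{X}^*_k$.

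The step I expect to be the main obstacle is the self-consistency of this linear ansatz under heterogeneous types. Aggregating the per-type recursions couples $\bar{X}^*_{k+1}$ to the weighted combination $\sum_{\phi}P(\phi)H(\phi)\hat{X}_k(\phi)$, which is not a function of $\bar{X}^*_k=\sum_{\phi}P(\phi)\hat{X}_k(\phi)$ alone unless the type-dependent terms close up. Thus making $\Theta$ well-defined as a matrix map (rather than only pinning down the vector $\Theta(L)\bar{X}_0^*$) and verifying that its fixed point reproduces the \emph{entire} sequence $\bar{X}^*$ rather than merely its first step is the delicate point; I expect the Stein-equation bounds together with Assumption \ref{Ass4} to be exactly what enforces both the norm bound $\|L^*\|\le1$ and the contraction that pins down $L^*$.
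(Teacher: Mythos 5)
You are correct and your route is essentially the paper's own: part (i) is the identical self-map/$\zeta$-contraction/Banach argument on $(\mathcal{X},\|\cdot\|_\infty)$, and for part (ii) your Stein-equation map $\Theta$ is precisely the paper's operator $\bar{\mathcal{T}}(L)=\sum_{\phi\in\Phi}\big(H(\phi)+B(\phi)\Gamma(\phi)\sum_{\alpha=0}^{\infty}(H^{\alpha}(\phi))'Q(\phi)L^{\alpha+1}\big)P(\phi)$ (since $M_L(\phi)L=\sum_{\alpha\geq 0}(H^{\alpha}(\phi))'Q(\phi)L^{\alpha+1}$), handled by the same contraction estimate under Assumption \ref{Ass4} on $\mathcal{L}$ and the same appeal to uniqueness from (i) to conclude $\bar{X}^*_{k}=(L^*)^k\bar{X}^*_0$. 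The type-heterogeneity closure issue you flag as the delicate step is genuine, but the paper's proof passes over it in exactly the same way (it asserts $\hat{X}^*_{k+1}=L^*\hat{X}^*_k$ and then invokes (i) without verifying that the aggregated linear trajectory is a fixed point of $\mathcal{T}$ when the $\nu_{\phi,0}$ and $H(\phi)$ differ across types), so on this point your proposal is at parity with, and more candid than, the published argument.
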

\begin{proof}
\begin{enumerate}[(i)]
\item Consider the linear system \eqref{X2} with driving input $\bar{X}_k$, which is bounded since $\bar{X} \in \mathcal{X}.$ Since $\|H(\phi)\| <1$ by Assumption \ref{Ass4}, and $g_k \in \mathcal{X},~\forall k$, we have from \eqref{X2} and \eqref{MFS}, that $\sup_{k\geq 0}{\|\mathcal{T}(\bar{X})(k)\|} < \infty$, which proves the first statement in part (i). Next, consider the following:
\begin{align*}
&\|\mathcal{T}(\bar{X}_1)- \mathcal{T}(\bar{X}_2)\|_{\infty} = \|\sum_{\phi \in \Phi}({\hat{X}_{1} - \hat{X}_{2})P(\phi)}\|_{\infty} \\
& \leq \sum_{\phi \in \Phi}{(\|Q(\phi)\|\|B(\phi)\Gamma(\phi)\|\left(\sum_{s=0}^{\infty}\|H(\phi)\|^s\right)^2P(\phi)} \\ & \qquad \times \|\bar{X}_1 - \bar{X}_2\|_{\infty} \\
& = \zeta \|\bar{X}_1 - \bar{X}_2\|_{\infty},
\end{align*} 
where the last equality follows from Assumption \ref{Ass4}. Finally, using Banach's fixed point theorem and the first statement of part (i), $\mathcal{T}(\bar{X})$ has a unique fixed point in $\mathcal{X}$.
\item Define the operator $\bar{\mathcal{T}}: \mathbb{R}^{n\times n} \rightarrow \mathbb{R}^{n\times n}$, given as:
\begin{align*}
\bar{\mathcal{T}}_{\phi}(L) & := H(\phi) + B(\phi)\Gamma(\phi)\sum_{\alpha=0}^{\infty}{(H^{\alpha}(\phi))'Q(\phi)L^{\alpha+1}},\\
\bar{\mathcal{T}}(L) & := \sum_{\phi \in \Phi} \bar{\mathcal{T}}_{\phi}(L) P(\phi)
\end{align*} with $L^* = \bar{\mathcal{T}}(L^*)$ and $\hat{X}_{k+1}^* = L^*\hat{X}_k^*$. To prove that such a $L^*$ indeed exists, we follow the same lines of proof as in \cite{uz2020reinforcement} to arrive at
\begin{align*}
	& \|\bar{\mathcal{T}}(L_2) - \bar{\mathcal{T}}(L_1)\| \\
	& \hspace{0.6cm} < \sum_{\phi \in \Phi}\frac{\|B(\phi)\Gamma(\phi)\|\|Q\|}{(1-\|H(\phi)\|)^2}\|L_2-L_1\| P(\phi),
\end{align*}
which, under Assumption \ref{Ass4}, establishes that $\bar{\mathcal{T}}$ is a contraction. Using completeness of $\mathcal{L}$ and Banach's fixed point theorem, we indeed have the existence of such an $L^*$. Finally, from (i) above, we get that the unique MF trajectory $\bar{X}^*$ can be constructed recursively as $\bar{X}^*_{k} = (L^*)^k\bar{X}^*_0$ with $\hat{X}_0 = \nu_{\phi,0}$.

\end{enumerate}
\end{proof}
Theorem \ref{Th2} (i) provides us with a unique MFE while the linearity of the MF trajectory in (ii) gives a control law which is linear in the state of the agent and the equilibrium trajectory \cite{uz2020reinforcement}. This further makes the computation of this trajectory tractable which would otherwise have involved a non-causal infinite sum.
\subsection{$\epsilon$-Nash Equilibrium}
We now show that under the information structure in Table \ref{tab:Infostates}, the MFE constitutes an $\epsilon$-Nash equilibrium for the $N$-player game. 
Before doing that, we first provide the definition of $\epsilon$-Nash equilibrium for the discrete-time MFG under communication constraints. Let us denote the space of admissible centralized control policies for agent $i$ as $\bar{\mathcal{M}}_i^c$, under a centralized information structure, where each agent is assumed to have  access to other agents' output histories. Then, we have the following:
\begin{definition}\cite{moon2014discrete,saldi2018markov}
The set of control policies $\{\mu^i \in {\mathcal{M}}_i^c, ~i \in [1,N]\}$ constitute an $\epsilon$-Nash equilibrium with respect to the cost functions $\{J_i^N, ~i \in [1,N]\}$, if, for some $\epsilon > 0$,
\begin{align*}
J_i^N(\mu^i,\mu^{-i}) -\epsilon \leq \inf_{\pi^i \in {\mathcal{M}_i^c}}{J_i^N(\pi^i,\mu^{-i})}.
\end{align*}
\end{definition}
We start by proving that the mass behaviour in \eqref{LQT} converges to the equilibrium MF trajectory as in Theorem \ref{Th2} (i). Note henceforth, signals superscripted by an asterisk ($*$) will represent quantities in the equilibrium, e.g., $X^{i*}_k$ denotes the state of agent $i$ at time $k$ under equilibrium control policy.
\begin{lemma}\label{L2}
Suppose Assumptions \ref{Ass3}-\ref{Ass4} hold and all the agents operate under the equilibrium control policy. Then, the coupling term in \eqref{LQT} converges (in the mean-square sense) to the equilibrium mean-field trajectory, i.e.,
\begin{align}\label{MSE_Con}
\lim\limits_{N\rightarrow \infty}\limsup\limits_{T\rightarrow \infty}\frac{1}{T}\mathbb{E}\left\lbrace \sum_{k=0}^{T-1}\Bigg\|\frac{1}{N}\sum_{i=1}^{N}{X_k^{i*}} - \bar{X}_k^*\Bigg\|^2\right\rbrace = 0.
\end{align}
\end{lemma}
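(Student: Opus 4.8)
The plan is to exploit a type-wise law of large numbers: under the equilibrium policy, agents sharing a common type $\phi$ are independent, identically distributed copies of one stable linear recursion driven by a \emph{deterministic} forcing term shared within the type, so their empirical average concentrates around the expected trajectory $\hat{X}_k(\phi)$. First I would partition the population by type and write the empirical mean as $\frac{1}{N}\sum_{i=1}^{N}X_k^{i*}=\sum_{\phi\in\Phi}P_N(\phi)\,m_k^\phi$, where $m_k^\phi:=\frac{1}{N_\phi}\sum_{i:\phi_i=\phi}X_k^{i*}$ is the within-type average and $N_\phi=NP_N(\phi)$. Recalling that $\bar{X}_k^*=\sum_{\phi}P(\phi)\hat{X}_k(\phi)$, I would split the tracking error into a \emph{statistical} part $\sum_\phi P_N(\phi)\big(m_k^\phi-\hat{X}_k(\phi)\big)$ and a \emph{distributional} part $\sum_\phi\big(P_N(\phi)-P(\phi)\big)\hat{X}_k(\phi)$, and bound each separately.

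For the statistical part I would apply the per-agent closed-loop recursion, which takes the form \eqref{X1}, namely $X_{k+1}^{i*}=H(\phi)X_k^{i*}-B(\phi)\Gamma(\phi)g_{k+1}+B(\phi)\Pi(\phi)\bar{e}_k^i+W_k^i$, in which $g_{k+1}$ is deterministic and shared by all agents of type $\phi$. Subtracting the expected dynamics \eqref{X2}, the centered average $\tilde{m}_k^\phi:=m_k^\phi-\hat{X}_k(\phi)$ obeys $\tilde{m}_{k+1}^\phi=H(\phi)\tilde{m}_k^\phi+B(\phi)\Pi(\phi)\big(\tfrac{1}{N_\phi}\sum_i\bar{e}_k^i\big)+\tfrac{1}{N_\phi}\sum_iW_k^i$, with $\tilde{m}_0^\phi$ of covariance $\Sigma_x/N_\phi$. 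Since the process noises $W_k^i$ and, by Lemma \ref{L1}, the relative errors $\bar{e}_k^i$ are independent across agents with zero mean, each averaged injection has covariance of order $1/N_\phi$. Because $H(\phi)$ is Hurwitz with $\|H(\phi)\|<1$ under Assumption \ref{Ass4}, unrolling the recursion and applying the triangle inequality in $L^2$ together with $\sum_{j\geq0}\|H(\phi)\|^j=(1-\|H(\phi)\|)^{-1}$ yields the uniform-in-$k$ bound $\mathbb{E}\|\tilde{m}_k^\phi\|^2\leq C/N_\phi$. As $\tilde{m}_k^\phi$ is zero-mean and independent across types, the cross terms vanish and $\mathbb{E}\big\|\sum_\phi P_N(\phi)\tilde{m}_k^\phi\big\|^2=\sum_\phi P_N(\phi)^2\,\mathbb{E}\|\tilde{m}_k^\phi\|^2\leq\frac{C}{N}\sum_\phi P_N(\phi)=C/N$, uniformly in $k$.

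For the distributional part I would use Assumption \ref{Ass3}(ii) with $\|H(\phi)\|<1$ to argue that $\hat{X}_k(\phi)$, being the output of a stable recursion driven by the bounded sequence $g$, is uniformly bounded in $k$ and $\phi$; the deterministic term is then bounded by $|\Phi|\big(\max_\phi|P_N(\phi)-P(\phi)|\big)^2\sup_{k,\phi}\|\hat{X}_k(\phi)\|^2$, which vanishes as $N\to\infty$ by the assumed convergence $P_N\to P$. Combining the two estimates gives a per-$k$ bound that is uniform in $k$, so the Cesàro average over $k=0,\dots,T-1$ inherits the same bound; letting $T\to\infty$ and then $N\to\infty$ drives \eqref{MSE_Con} to zero.

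The hard part will be justifying that the constant $C$ in $\mathbb{E}\|\tilde{m}_k^\phi\|^2\leq C/N_\phi$ is genuinely independent of $k$, which hinges on the second moment $\mathbb{E}\|\bar{e}_k^i\|^2$ of the relative estimation error being bounded uniformly in time. Since $A(\phi)$ may be unstable, the open-loop error can grow between transmissions, so this uniform bound is not automatic; I expect to close it by invoking the innovations-based threshold scheduling policy $\delta_k'S\delta_k\geq\alpha$, which forces a transmission, and hence an error reset to the bounded MMSE level, before the inter-sample prediction error escapes, thereby keeping $\sup_k\mathbb{E}\|\bar{e}_k^i\|^2$ finite. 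All remaining steps are routine law-of-large-numbers and stable-filter estimates.
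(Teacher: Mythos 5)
Your proposal is correct, and its skeleton coincides with the paper's proof: the same two-term decomposition (your ``statistical'' part is exactly the paper's $\frac{1}{N}\sum_i\big(X_k^{i*}-\hat{X}_k^*(\phi_i)\big)$, and your ``distributional'' part is exactly $\sum_\phi\big(P_N(\phi)-P(\phi)\big)\hat{X}_k^*(\phi)$, handled there via uniform boundedness of $\hat{X}_k^*$ and weak convergence $F_N\to F$), and the same centered recursion $Z_{k+1}^{i*}=H(\phi_i)Z_k^{i*}+W_k^i+B(\phi_i)\Pi(\phi_i)\bar{e}_k^i$ driven by \eqref{X1}. The genuine difference is in how the statistical part is closed: the paper invokes \cite[Lemma 2]{moon2014discrete} as a black box to get the uniform bound $\frac{2}{T}\mathbb{E}\{\sum_{k=0}^{T-1}\|\frac{1}{N}\sum_i Z_k^{i*}\|^2\}\leq M_1/N$, whereas you prove the analogous bound $\mathbb{E}\|\tilde{m}_k^\phi\|^2\leq C/N_\phi$ from scratch via Minkowski's inequality, $\|H(\phi)\|<1$, and independence of the per-agent injections $(W_k^i,\bar{e}_k^i)$ guaranteed by Lemma \ref{L1}. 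This buys something real: you correctly identify that the whole argument hinges on $\sup_k\mathbb{E}\|\bar{e}_k^i\|^2<\infty$, which is \emph{not} automatic when $A(\phi)$ is unstable and which the cited result (proved in a packet-drop setting, not for threshold scheduling over an AWGN channel) does not literally supply; your resolution --- between transmissions the trigger enforces $\delta_k'S\delta_k<\alpha$ so the one-step propagated error has bounded second moment, and at transmission instants the MMSE error is dominated by the prediction error since the predictor is a function of the decoder's information --- is the right way to fill that hole. So your route is the same decomposition but more self-contained, and it patches a step the paper delegates to a citation that does not exactly match the present communication model; the paper's route is shorter at the cost of that unexamined transfer.
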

\begin{proof}
First, consider the following: 
\begin{align*}
& \lim\limits_{N\rightarrow \infty}\limsup\limits_{T\rightarrow \infty}\frac{1}{T}\mathbb{E}\left\lbrace \sum_{k=0}^{T-1}\Bigg\|\frac{1}{N}\sum_{i=1}^{N}{X_k^{i*}} - \bar{X}_k^*\Bigg\|^2\right\rbrace \\
& \leq \lim\limits_{N\rightarrow \infty}\limsup\limits_{T\rightarrow \infty}\frac{2}{T}\mathbb{E}\left\lbrace \sum_{k=0}^{T-1}\Bigg\|\frac{1}{N}\sum_{i=1}^{N}{X_k^{i*}} -{\hat{X}_k^{*}(\phi_i)}\Bigg\|^2\right\rbrace \\
& ~~~~+ 2\lim\limits_{N\rightarrow \infty} \sup_{k\geq 0}\Bigg\|\frac{1}{N}\sum_{i=1}^{N}{\hat{X}_k^{*}(\phi_i)} - \bar{X}^*_k\Bigg\|^2 
\end{align*}
where the inequality follows from the identity $\|a+b\|^2 \leq 2\|a\|^2+2\|b\|^2$ and $\frac{1}{N}\sum_{i=1}^{N}{\hat{X}_k^{*}(\phi_i)} = \sum_{\phi_i \in \Phi}{\hat{X}^*_k(\phi_i)P_N(\phi_i)}, \forall k$.
We now prove that the first term in the above inequality vanishes. Let $Z_k^{i*} = X_k^{i*} - \hat{X}_k^{i*}$. Then, using \eqref{X1}, we have
\begin{align*}
Z_{k+1}^{i*} = H(\phi_i)Z_{k}^{i*} + W_k^i + B(\phi_i)\Pi(\phi_i)\bar{e}_k^i.
\end{align*}
Then, as in \cite[Lemma 2]{moon2014discrete}, $\exists ~T_1 > 0$ and $M_1 > 0,$ independent of $T$ and $N$, such that $\frac{2}{T}\mathbb{E}\left\lbrace \sum_{k=0}^{T-1}\|\frac{1}{N}\sum_{i=1}^{N}{Z_k^{i*}}\|^2\right\rbrace \leq \frac{M_1}{N},~ \forall T > T_1$, which implies that $\limsup\limits_{T\rightarrow \infty}\frac{2}{T}\mathbb{E}\left\lbrace \sum_{k=0}^{T-1}\|\frac{1}{N}\sum_{i=1}^{N}{Z_k^{i*}}\|^2\right\rbrace \leq \frac{M_1}{N}$. This finally gives $\lim\limits_{N\rightarrow \infty}\limsup\limits_{T\rightarrow \infty}\frac{2}{T}\mathbb{E}\left\lbrace \sum_{k=0}^{T-1}\|\frac{1}{N}\sum_{i=1}^{N}{Z_k^{i*}}\|^2\right\rbrace = 0.$ 
Next, since the support of $\Phi$ is finite (and hence compact), this implies that $\hat{X}^*_k(\phi_i)$ is uniformly bounded for all $k$. Further, since $\bar{X}^* \in \mathcal{X}$ from Theorem 2, we have $\lim\limits_{N\rightarrow \infty} \sup_{k\geq 0}\|\frac{1}{N}\sum_{i=1}^{N}{\hat{X}_k^{*}(\phi_i)} - \bar{X}^*_k\|^2  = 0$. Thus, \eqref{MSE_Con} holds and the proof is complete.
\end{proof}
Now, we prove the $\epsilon$-Nash property of the MFE. Toward that end, consider the following cost functions: 
\begin{align}
&J_i^N(\mu^{i*}, \mu^{-i*})= \\& \limsup_{T\rightarrow \infty}\frac{1}{T}\mathbb{E}\left\lbrace \sum_{k=0}^{T-1}\Bigg\|X_k^{i*}-\frac{1}{N}\sum_{j=1}^{N}{X_k^{j*}}\Bigg\|^2_{Q} + \|U_k^{i*}\|^2_{R}\right\rbrace, \nonumber \\
&J(\mu^{i*}, \bar{X}^*)= \\& \limsup_{T\rightarrow \infty}\frac{1}{T}\mathbb{E}\left\lbrace \sum_{k=0}^{T-1}\|X_k^{i*}-\bar{X}^*\|^2_{Q} + \|U_k^{i*}\|^2_{R}\right\rbrace,\nonumber \\
&J_i^N(\pi^i, \mu^{-i*})= \\& \limsup_{T\rightarrow \infty}\frac{1}{T}\mathbb{E}\left\lbrace \sum_{k=0}^{T-1}\Bigg\|X_k^{i,\pi^i}-\frac{1}{N}\sum_{j=1}^{N}{X_k^{j*}}\Bigg\|^2_{Q} + \|V_k^{i}\|^2_{R}\right\rbrace, \nonumber 
\end{align}
where $\bar{X}^*$ is the equilibrium MF trajectory (Theorem \ref{Th2}) and $X_k^{i,\pi^i}$ is the state of agent $i$ at time $k$ when it chooses a control law $\pi^i$ from the set of centralized policies ${\mathcal{M}}_i^c$. Notice that this set is strictly larger than the set $\mathcal{M}$. Furthermore, the control action $V_k^i$ is derived from $\pi^i \in {\mathcal{M}}_i^c$.
Now, we have the following theorem stating the $\epsilon$-Nash result, i.e., that the control laws prescribed by the MFE are also $\epsilon$-Nash in the finite
population case.

\begin{theorem}\label{Th3}
Under the Assumptions \ref{Ass3}-\ref{Ass4}, the set of $N$ decentralized control laws $\{\mu^{i*},~i \in [1,N]\}$, where $\mu^{i*} = \mu^*$, constitutes an $\epsilon$-Nash equilibrium for the LQ-MFG with communication constrained AWGN channel, more precisely, we have
\begin{align}\label{epsNash}
J_i^N(\mu^{i*},\mu^{-i*}) & \leq \inf_{\pi^i \in {\mathcal{M}}_i^c}{J_i^N(\pi^i,\mu^{-i*})} + \mathcal{O}\left(\limsup_{T\rightarrow \infty}\sqrt{\epsilon_T^N}\right),
\end{align}
where $\epsilon_T^N = \frac{1}{T}\mathbb{E}\left\lbrace \sum_{k=0}^{T-1}{\|\frac{1}{N}\sum\limits_{j=1}^{N}{X_k^{j*}} - \bar{X}_k^*\|^2} \right\rbrace$.
\end{theorem}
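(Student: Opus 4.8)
The plan is to bound the finite-population cost of the mean-field policy from above by the finite-population best-response cost, using the single-agent mean-field cost $J(\cdot,\bar{X}^*)$ as a bridge and invoking Lemma \ref{L2} to control the discrepancy between the empirical coupling term $\frac{1}{N}\sum_j X_k^j$ and the equilibrium trajectory $\bar{X}^*$. Concretely, I would establish the chain
\begin{align*}
J_i^N(\mu^{i*},\mu^{-i*}) &\leq J(\mu^{i*},\bar{X}^*) + \mathcal{O}\big(\sqrt{\epsilon_T^N}\big) \\
&= \inf_{\pi^i \in \mathcal{M}_i^c} J(\pi^i,\bar{X}^*) + \mathcal{O}\big(\sqrt{\epsilon_T^N}\big) \\
&\leq \inf_{\pi^i \in \mathcal{M}_i^c} J_i^N(\pi^i,\mu^{-i*}) + \mathcal{O}\big(\sqrt{\epsilon_T^N}\big),
\end{align*}
after which applying $\limsup_{T\to\infty}$ to the per-horizon bounds yields the claimed rate. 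The three relations are, respectively, a cost-comparison under the equilibrium profile, optimality of $\mu^*$ in the mean-field problem over the enlarged centralized class, and a cost-comparison under a unilateral deviation.

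For the first inequality I would write $d_k := \bar{X}_k^* - \frac{1}{N}\sum_j X_k^{j*}$ and expand the quadratic difference $\|X_k^{i*}-\frac{1}{N}\sum_j X_k^{j*}\|_Q^2 - \|X_k^{i*}-\bar{X}_k^*\|_Q^2 = 2(X_k^{i*}-\bar{X}_k^*)'Q\,d_k + d_k'Q\,d_k$. Time-averaging and applying Cauchy--Schwarz bounds the cross term by $\sqrt{\frac{1}{T}\sum_k \mathbb{E}\|X_k^{i*}-\bar{X}_k^*\|^2}\cdot\sqrt{\epsilon_T^N}$ and the remaining term by $\mathcal{O}(\epsilon_T^N)$. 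Here $\frac{1}{T}\sum_k\mathbb{E}\|X_k^{i*}-\bar{X}_k^*\|^2$ is bounded uniformly because $H(\phi)$ is Hurwitz (Proposition \ref{Prop1}) under Assumption \ref{Ass4} and $\bar{X}^*\in\mathcal{X}$ is bounded (Theorem \ref{Th2}), so the dominant error is $\mathcal{O}(\sqrt{\epsilon_T^N})$, with $\epsilon_T^N\to0$ guaranteed by Lemma \ref{L2}.

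The equality is where the centralized class $\mathcal{M}_i^c$ is handled. In the mean-field tracking problem the dynamics \eqref{systemMF} of the generic agent are driven only by its own control and process noise, and $J(\pi^i,\bar{X}^*)$ couples only to the \emph{deterministic} sequence $\bar{X}^*$; the additional $\sigma$-fields carried by a centralized policy (the other agents' output histories) are statistically independent of agent $i$'s own state and noise, hence useless for the tracking objective. Therefore the centralized infimum is attained within the decentralized class and equals $J(\mu^{i*},\bar{X}^*)$ by optimality of $\mu^*=\Psi(\bar{X}^*)$.

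The second inequality is the main obstacle, since it must hold \emph{uniformly} over the deviating policy $\pi^i$ so that the error survives the infimum. When agent $i$ alone deviates, the empirical average becomes $\frac{1}{N}\sum_{j\neq i}X_k^{j*} + \frac{1}{N}X_k^{i,\pi^i}$: the first sum differs from $\bar{X}^*$ by the same $\mathcal{O}(\sqrt{\epsilon_T^N})$ quantity as in Lemma \ref{L2} (up to the harmless $\frac{1}{N}$ reweighting), while the lone deviating term $\frac{1}{N}X_k^{i,\pi^i}$ has time-averaged second moment of order $1/N$ provided $\pi^i$ has finite cost. I would dispose of infinite-cost deviations trivially and, for finite-cost ones, repeat the quadratic expansion of the first step to obtain $J(\pi^i,\bar{X}^*)\leq J_i^N(\pi^i,\mu^{-i*})+\mathcal{O}(\sqrt{\epsilon_T^N})$ with a constant independent of $\pi^i$. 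The delicate point throughout is precisely this uniformity---bounding the deviating agent's contribution without assuming stability a priori, and ensuring the $\mathcal{O}(\sqrt{\epsilon_T^N})$ term does not depend on the choice of $\pi^i$---after which the bound passes through the infimum and the proof concludes.
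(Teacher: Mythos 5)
Your proposal is correct and follows essentially the same route as the paper's proof: the same sandwich $J_i^N(\mu^{i*},\mu^{-i*}) \to J(\mu^{i*},\bar{X}^*) \to \inf_{\pi^i} J_i^N(\pi^i,\mu^{-i*})$ via the quadratic expansion of the coupling term, Cauchy--Schwarz on the cross terms, Lemma \ref{L2} to send $\epsilon_T^N \to 0$, and uniformity over deviations secured by restricting to policies whose cost (hence time-averaged second moment) is bounded by that of the equilibrium policy. Your explicit treatment of the deviating agent's $\frac{1}{N}X_k^{i,\pi^i}$ contribution and of why centralized information is useless against a deterministic tracking target are points the paper handles only implicitly, but they do not change the argument.
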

\begin{proof}
We prove the theorem in two steps. In the first step, we derive an upper bound on $J_i^N(\mu^{i*}, \mu^{-i*})- J(\mu^i, \bar{X}^*)$, and in step 2, on $J(\mu^i, \bar{X}^*) - J_i^N(\pi^i, \mu^{-i*})$. Finally, we combine the two to get \eqref{epsNash}.

\noindent \textbf{Step 1:} Consider the following:
\begin{align}
&J_i^N(\mu^{i*}, \mu^{-i*})- J(\mu^i, \bar{X}^*) \leq J_i^N(\mu^{i*}, \mu^{-i*})- J(\mu^{i*}, \bar{X}^*) \nonumber \\
& = \limsup_{T\rightarrow \infty}\frac{1}{T}\mathbb{E}\left\lbrace \sum_{k=0}^{T-1}\Bigg\|X_k^{i*}-\frac{1}{N}\sum_{j=1}^{N}{X_k^{j*}}\Bigg\|^2_{Q} \right. \nonumber \\& \left. ~~~~ - \|X_k^{i*}-\bar{X}_k^*\|^2_{Q} \right\rbrace, \nonumber \\
& = \limsup_{T\rightarrow \infty}\frac{1}{T}\mathbb{E}\left\lbrace \sum_{k=0}^{T-1}\Bigg\|\bar{X}_k^*-\frac{1}{N}\sum_{j=1}^{N}{X_k^{j*}}\Bigg\|^2_{Q}\right. \nonumber \\ &\left. ~~~~+ 2 \left( X_k^{i*}-\bar{X}_k^*\right)'Q\left( \bar{X}_k^*-\frac{1}{N}\sum_{j=1}^{N}{X_k^{j*}}\right) \right\rbrace \nonumber \\
& \leq \limsup_{T\rightarrow \infty} \|Q\|\epsilon_T^N + 2\|Q\|\mathbb{E}\left\lbrace \sqrt{\frac{1}{T}\sum_{k=0}^{T-1}\|X_k^{i*}-\bar{X}_k^*\|^2} \right. \nonumber \\
&\left. \qquad \qquad \times \sqrt{\frac{1}{T}\sum_{k=0}^{T-1}\Bigg\|\bar{X}_k^*-\frac{1}{N}\sum_{j=1}^{N}{X_k^{j*}}\Bigg\|^2} \right\rbrace \nonumber \\
& \leq \limsup_{T\rightarrow \infty}\|Q\|\epsilon_T^N \nonumber \\& + \limsup_{T\rightarrow \infty}2\|Q\|\sqrt{\epsilon_T^N\mathbb{E}\left\lbrace \frac{1}{T}\sum_{k=0}^{T-1}\|X_k^{i*}-\bar{X}_k^*\|^2 \right\rbrace} \label{X3}
\end{align}
where both the inequalities follow from the Cauchy-Schwarz inequality.

\noindent \textbf{Step 2: }Consider the following:
\begin{align*}
&J_i^N(\pi^i, \mu^{-i*})= \\& \limsup_{T\rightarrow \infty}\frac{1}{T}\mathbb{E}\left\lbrace \sum_{k=0}^{T-1}\Bigg\|X_k^{i,\pi^i}-\frac{1}{N}\sum_{j=1}^{N}{X_k^{j*}}\Bigg\|^2_{Q} + \|V_k^{i}\|^2_{R}\right\rbrace \\
\end{align*}
\begin{align*}
& = \limsup_{T\rightarrow \infty}\frac{1}{T}\mathbb{E}\left\lbrace \sum_{k=0}^{T-1}\|X_k^{i,\pi^i}-\bar{X}_k^{i*}\|^2_Q + \|V_k^{i}\|^2_{R} \right. \\& \left. \qquad \qquad + \Bigg\|\frac{1}{N}\sum_{j=1}^{N}{X_k^{j*}} - \bar{X}_k^{i*}\Bigg\|^2_Q \right. \\& \left. + 2 \left(X_k^{i,\pi^i}-\bar{X}_k^* \right)'Q \left( \bar{X}_k^*-\frac{1}{N}\sum_{j=1}^{N}{X_k^{j*}}\right) \right\rbrace \\
& = J(\pi^i, \bar{X}^*) - \limsup_{T\rightarrow \infty}\frac{1}{T}\mathbb{E} \left\lbrace \sum_{k=0}^{T-1} \Bigg\|\frac{1}{N}\sum_{j=1}^{N}{X_k^{j*}} - \bar{X}_k^{i*}\Bigg\|^2_Q \right. \\ & \left. + 2 \left(\bar{X}_k^* - X_k^{i,\pi^i} \right)'Q \left( \bar{X}_k^*-\frac{1}{N}\sum_{j=1}^{N}{X_k^{j*}}\right) \right\rbrace \\
& \geq J(\mu^i, \bar{X}^*) - \limsup_{T\rightarrow \infty}\frac{1}{T}\mathbb{E} \left\lbrace \sum_{k=0}^{T-1} \Bigg\|\frac{1}{N}\sum_{j=1}^{N}{X_k^{j*}} - \bar{X}_k^{i*}\Bigg\|^2_Q \right. \\ & \left. + 2 \left(\bar{X}_k^* - X_k^{i,\pi^i} \right)'Q \left( \bar{X}_k^*-\frac{1}{N}\sum_{j=1}^{N}{X_k^{j*,\pi^i}}\right) \right\rbrace.
\end{align*} 
Finally, using Cauchy-Schwarz inequality in the same manner as for \eqref{X3}, we get
\begin{align}\label{X4}
&J(\mu^i, \bar{X}^*) - J_i^N(\pi^i, \mu^{-i*}) \leq \limsup_{T\rightarrow \infty}\|Q\|\epsilon_T^N \nonumber \\& + \limsup_{T\rightarrow \infty}2\|Q\|\sqrt{\epsilon_T^N\mathbb{E}\left\lbrace \frac{1}{T}\sum_{k=0}^{T-1}\|X_k^{i,\pi^i}-\bar{X}_k^*\|^2 \right\rbrace}. 
\end{align}

\par Similar to Theorem 3 in \cite{moon2014discrete}, there exist $M_2,~T_2>0$, such that $\frac{1}{T}\mathbb{E}\left\lbrace \sum_{k=0}^{T-1}\|X_k^{i*}\|^2 \right\rbrace < M_2,~ \forall T > T_2$. Further, from Theorem \ref{Th2} (i), there exist $M_3, ~T_3>0$, such that $\frac{1}{T}\mathbb{E}\left\lbrace \sum_{k=0}^{T-1}\|\bar{X}_k^{i*}\|^2 \right\rbrace < M_3, ~\forall T>T_3$. Finally, since $\bar{\mathcal{M}}_i \subseteq {\mathcal{M}}_i^c$,  $\inf_{\pi^i \in {\mathcal{M}}_i^c}J_i^N(\pi^i, \mu^{-i*}) \leq J_i^N(\mu^{i*}, \mu^{-i*})$, we may consider $\pi^i \in {\mathcal{M}}_i^c$ such that there exist $M_4,~T_4 >0$, with the property that $\frac{1}{T}\mathbb{E}\left\lbrace \sum_{k=0}^{T-1}\|X_k^{i, \pi^i}\|^2 \right\rbrace < M_4,~\forall T>T_4$. Choose $T_5 = \max\{T_1,T_2,T_3,T_4\}$, and let $T > T_5$, following which, we have \eqref{epsNash} from \eqref{X3} and \eqref{X4}. Finally, define $\epsilon:= \mathcal{O}\big(\limsup_{T\rightarrow \infty}\sqrt{\epsilon_T^N}\big)$, which converges to 0 as $N \rightarrow \infty$ using Lemma \ref{L2}. The proof is thus complete.
\end{proof}
Before concluding this section, we remark that according to Theorem \ref{Th3}, the decentralized equilibrium  policy provides an $\epsilon$-Nash equilibrium for the centralized policy structure in the $N$-player game. Consequently, it also provides an $\epsilon$-Nash equilibrium for the decentralized policy structure in the original $N$-player game formulated in Section II.
\section{Simulations}\label{Sec4}
In this section we demonstrate the performance of MFE under different scheduling policies. We simulate a finite population game with scalar dynamics and a single type $\phi$. The dynamics and cost parameters of the agents satisfy Assumptions \ref{Ass3} and \ref{Ass4}. For Figure, \ref{fig:est_err} we simulate a game of $N = 100$ agents and show that in spite of significant channel noise the estimation error decreases allowing the agents to form a consensus in a very short time. Note that the output does not perfectly mimic the true state due to the asynchronous nature of communication.
\begin{figure}[h]
	\centering
	\begin{subfigure}{0.23\textwidth} 
		\includegraphics[scale=.20]{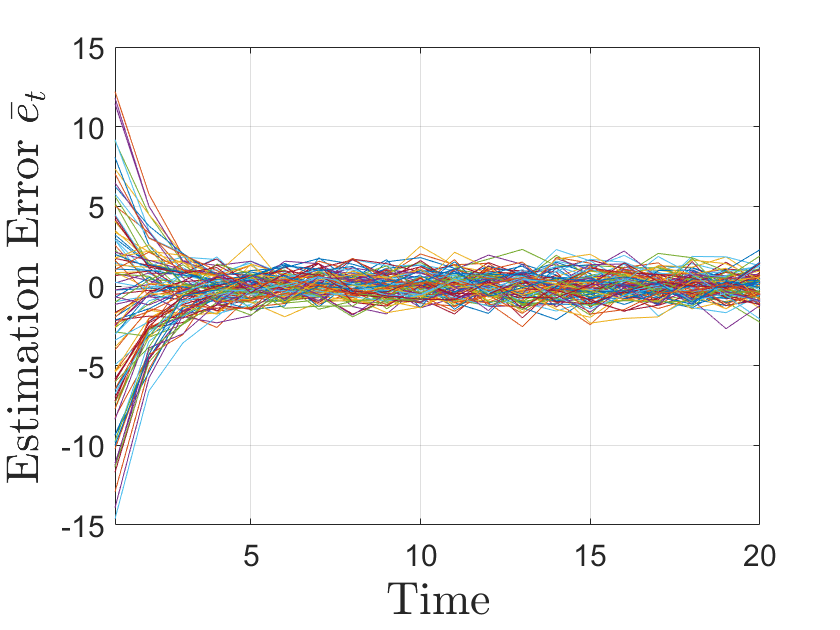}
		\caption{Estimation error}
		\label{fig:est_err}
	\end{subfigure}
	\begin{subfigure}{0.23\textwidth} 
		\includegraphics[scale=.19]{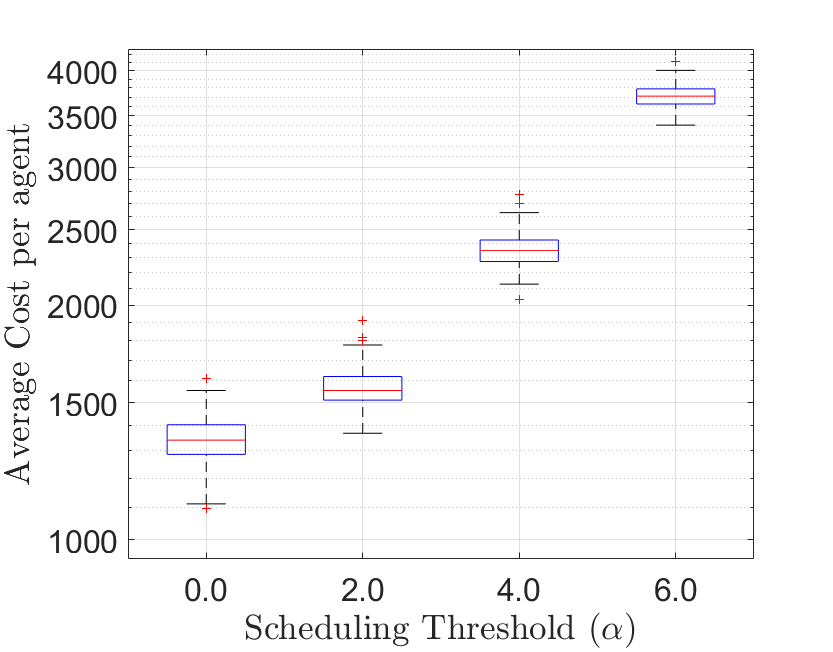}
		\caption{Average cost vs Scheduling Threshold $(\alpha)$}
		\label{fig:sched_pol}
	\end{subfigure}
	\caption{Performance of the equilibrium  policy  under communication constraints}
\end{figure}

For Figure \ref{fig:sched_pol},we simulate the behavior of N=1000 agents and plot the average cost per agent on a logarithmic axis against scheduling threshold of $\alpha = {0.0,2.0,4.0,6.0}$. The figure shows a box plot depicting the median (red line) and spread (box) of the average cost per agent over $100$ runs for each value of $\alpha$. The plot shows a clear increase in average cost per agent as $\alpha$ is increased, indicating that an increase in $\alpha$ leads to an increase in estimation error, which in turn causes a higher average cost per agent. This indicates that a compromise can be reached between performance (average cost) vs communication frequency through a judicious choice of the threshold parameter $\alpha$.

\section{Conclusion}\label{Sec5}
In this paper, we have studied LQ-MFGs under communication constraints, namely when there is intermittent communication over an AWGN channel. Under the defined information structure involving the scheduler, encoder, decoder and controller, we have proved that the control is free of the dual-effect in the mean field limit. Consequently, the optimal control policy has been shown to be certainty equivalent. Under appropriate assumptions, we have established the existence, uniqueness and characterization (linearity) of the mean-field trajectory, shown to have the $\epsilon$-Nash property. We have also empirically demonstrated that the performance of the equilibrium policies deteriorates for decreasing communication frequency, aligned with intuition.
\bibliographystyle{IEEEtran}
\bibliography{references}

\end{document}